
\documentclass[journal]{new-aiaa}

\usepackage{algorithm}
\usepackage{algorithmic}
\usepackage{color}
\usepackage{stmaryrd}
\usepackage{amsfonts}
\usepackage{amssymb}
\usepackage{empheq}
\usepackage{hyperref}
\usepackage{enumerate}
\usepackage{mathrsfs}
\usepackage{empheq}
\usepackage{marginnote}

\usepackage[utf8]{inputenc}

\usepackage{graphicx}
\usepackage{subcaption}
\usepackage{amsmath}
\usepackage[version=4]{mhchem}
\usepackage{siunitx}
\usepackage{longtable,tabularx}
\setlength\LTleft{0pt}

\newtheorem{theorem}{Theorem}

\newtheorem{lemma}{Lemma}

\newtheorem{proposition}{Proposition}

\newenvironment{proof}[1][Proof]{\begin{trivlist}
\item[\hskip \labelsep {\bfseries #1}]}{\end{trivlist}}

\title{Technical Note for "A Geodesic Approach for the Control of Tethered Quadrotors"}

\author{Tam W. Nguyen\footnote{Postdoctoral Researcher, Department of Aerospace Engineering, University of Michigan, 1320 Beal Avenue, Ann Arbor, MI 48109}}
\affil{University of Michigan, Ann Arbor, MI 48109}
\author{Marco M. Nicotra\footnote{Assistant Professor, Department of Electrical, Computer, and Energy Engineering, University of Colorado Boulder, 425 UCB, Boulder, CO 80309}}
\affil{University of Colorado Boulder, Boulder, CO 80309}
\author{Emanuele Garone \footnote{Associate Professor, Department of Control Engineering and System Analysis, Universit\'{e} libre de Bruxelles, Av. F.D. Roosevelt, 50, C.P. 165/55, Belgium}}
\affil{Universit\'{e} libre de Bruxelles, Brussels, 1050, Belgium}

\begin{document}

\maketitle

\begin{abstract}
    This technical note focuses on the control of a quadrotor unmanned aerial vehicle (UAV) tethered to the ground. The control objective is to stabilize the UAV to the desired position while ensuring that the cable remains taut at all times. A cascade control scheme is proposed. The inner loop controls the attitude of the UAV. The outer loop gives the attitude reference to the inner loop, and is designed so that (i) the gravity force is compensated, (ii) the cable is taut at all times, and (iii) the trajectory of the UAV follows the geodesic path. To prove asymptotic stability, small gain arguments are used. The control scheme is augmented with a reference governor to enforce constraints.
\end{abstract}

\section{Introduction}
\lettrine{U}{nmanned} Aerial Vehicles (UAVs) are very capable aerial platforms, and are used for surveillance, environmental interactions, and object manipulation \cite{UAV_CoopConstruction,Tiltrotor,nguyen2016control}. The potential of UAVs is still limited by factors such as flight time and onboard capabilities. A possible way to mitigate these issues is to connect the UAV to a ground station by a tether, capable of supplying energy, transmitting data, and/or applying forces.
Possible examples of tethered UAVs include: assisting the landing of a helicopter on a ship \cite{UAVTether_Helilanding}, and improving
fight stability in the presence of wind \cite{M_UAVKiteConfig}.

Since the presence of the tether influences the dynamics of the UAV, it is required to develop dedicated control strategies.
Most schemes in the literature use model inversion techniques.
In this note, we use a cascade control scheme, which does not require an accurate model to stabilize the system. 
This approach was first introduced in \cite{Taut_Automatica} for a bi-dimensional tethered UAV.

In this note, the saturation of the actuators are considered. We show that, due to the cable constraint and the saturations of the actuators, the points of equilibrium of the controlled system are only locally stable. Therefore, we augment the scheme with a Reference Governor (RG) \cite{garone2017reference} to enlarge the domain of attraction of the points of equilibrium.
It is also shown that the presence of transient in the inner loop can lead to a loss of cable tension. This behavior can be worsened in the presence of input saturations. This issue is again solved using the RG.

This technical note provides all the proofs and technicalities of the manuscript ``\emph{A Geodesic Approach for the Control of Tethered Quadrotors}". For more details on the literature, points of equilibrium, and numerical analyses, the reader is referred to the complete manuscript.

\section{Problem Statement}

Consider the 3D model of a quadrotor tethered to the ground.
We use the usual model of a UAV \cite{mayhew2009robust}, which is subject to the holonomic constraint
\begin{subequations}
\begin{empheq}[left=\empheqlbrace]{align}
m\ddot{\mathbf{p}}=& T\mathbf{R}\hat{\mathbf{z}}-mg\hat{\mathbf{z}} & \label{eq:positiondynamics}\\
\mathcal{\mathbf{J}}\dot{\pmb{\omega}}=& -\pmb{\omega}^{\wedge}\mathcal{\mathbf{J}}\pmb{\omega} + \pmb{\tau}, & \label{eq:attitudedynamics}\\
\dot{\mathbf{q}}=&\dfrac{1}{2}E(\mathbf{q})\pmb{\omega}\label{eq:quaterniondynamics},\\
\text{subject to:}\nonumber\\
\lVert \mathbf{p} \lVert = & L, \label{eq:pconstraint}
\end{empheq}
\end{subequations}
where $m\in\mathbb{R}_{> 0}$ is the mass of the UAV, $\mathbf{p}\in\mathbb{R}^3$ the position of the UAV, $T\in\mathbb{R}_{\geq 0}$ the UAV thrust, $\mathbf{R}\in SO(3)$ the UAV attitude rotation matrix, $\hat{\mathbf{z}}:=\left[\begin{matrix}0 & 0 & 1\end{matrix}\right]^T$ the vertical component of the inertial frame, $g\in\mathbb{R}_{> 0}$ the gravity acceleration, $\mathcal{\mathbf{J}}>0\in\mathbb{R}^{3\times3},\mathcal{\mathbf{J}}=\mathcal{\mathbf{J}}^T$ the moment of inertia of the UAV, $\pmb{\omega}:=\left[\begin{matrix}\omega_x & \omega_y & \omega_z \end{matrix}\right]^T\in\mathbb{R}^{3}$ the angular velocity of the UAV, $\pmb{\tau}\in\mathbb{R}^3$ the resultant torque of the UAV, $L\in\mathbb{R}_{>0}$ the length of the cable, and $\mathbf{q}:=\left[\begin{matrix}q_0 & \mathbf{q}_v^T\end{matrix}\right]^T\in\mathbb{H}$ the quaternion associated to $\mathbf{R}$ with $q_0\in\mathbb{R}$ as the real part, and $\mathbf{q}_v\in\mathbb{R}^3$ the imaginary part of $\mathbf{q}$.
$E(\mathbf{q}):=\left[\begin{matrix}-\mathbf{q}_v & q_0 \mathbf{I}_3+\mathbf{q}_v^{\wedge}\end{matrix}\right]^T\in\mathbb{R}^{4\times3}$ is the quaternion differential kinematics, $\mathbf{I}_3\in\mathbb{R}^{3\times3}$ the identity matrix, and $.^{\wedge}:\mathbb{R}^3\to\mathbb{R}^{3\times3}$ the skew operator defined as
\begin{equation}\label{eq:skewoperator}
\pmb{\omega}^{\wedge}:=\left[
\begin{matrix}
0 & -\omega_z & \omega_y \\
\omega_z & 0 & -\omega_x \\
-\omega_y & \omega_x & 0
\end{matrix}
\right].
\end{equation}

The thrust $T$ is generated by the propellers of the UAV and is aligned with the $z$-component of the body frame $\hat{\mathbf{z}}_b=\mathbf{R}\hat{\mathbf{z}}$. Furthermore, we assume that $T$ is limited and that the actuators are saturated as
\begin{eqnarray}\label{eq:Tconstraint}
0 \leq T \leq T_{max}, & T_{max} > mg.
\end{eqnarray}

It is worth noting that, $\mathbf{p}$ can be parameterized using the azimuthal angle $\phi\in[0,2\pi)$ and the polar angle $\theta\in(-\pi/2,\pi/2)$.
Moreover, (\ref{eq:pconstraint}) implies the existence of a reaction force opposite to $T_c\in\mathbb{R}_{\geq 0}$, which is the projection of the active force $\mathbf{F}_a=T\mathbf{R}\hat{\mathbf{z}}-mg\hat{\mathbf{z}}$ on the cable axis. For a massless and inextensible cable, to ensure that the cable is taut at all times, $T_c$ must satisfy\footnote{In this note, we denote the scalar product and vector product between two vectors in $\mathbb{R}^3$ as $\langle \cdot \;,\cdot \rangle:\mathbb{R}^3\times\mathbb{R}^3 \to \mathbb{R}$ and $\cdot \times \cdot :\mathbb{R}^3\times\mathbb{R}^3 \to \mathbb{R}^3$, respectively.}
\begin{equation} \label{eq:Tcconstraint}
T_c=\langle \mathbf{F}_a, \hat{\mathbf{r}} \rangle \geq T_{c,min},
\end{equation}
where $\hat{\mathbf{r}}=\mathbf{p}/L$, and $T_{c,min}\in\mathbb{R}_{\geq 0}$ is an arbitrary tension.
System (\ref{eq:positiondynamics}),(\ref{eq:pconstraint}) is equivalent to
\begin{equation}\label{eq:positiondynamicswithtension}
m\ddot{\mathbf{p}}=T\mathbf{R}\hat{\mathbf{z}}-mg\hat{\mathbf{z}}-T_c\hat{\mathbf{r}}
\end{equation}
under the assumption that $\lVert \mathbf{p}(0) \lVert = L$.

The objective is to stabilize the UAV to any desired position $\mathbf{p}_d$ such that $\lVert \mathbf{p}_d \lVert = L$, while maintaining constraint (\ref{eq:Tcconstraint}) on the cable satisfied at all times.

\section{Onboard Control}

The objective of the onboard controller is to ensure that $\lim_{t\to\infty} \mathbf{p}(t) = \mathbf{p}_d$.
A cascade control strategy is proposed, and we design the outer loop assuming an ideal inner loop.

\subsection{Ideal Attitude Dynamics}

Assume that $\mathbf{R}(t)=\mathbf{R}_d$ at each instant $t$. The system dynamics can be rewritten as
\begin{equation}\label{eq:pdynamicsideal}
\begin{cases}
m\ddot{\mathbf{p}} = T\mathbf{R}_d\hat{\mathbf{z}}-mg\hat{\mathbf{z}},\\
\text{subject to (\ref{eq:pconstraint})},
\end{cases}
\end{equation}
where $T$ and $\mathbf{R}_d$ are the control inputs.
The proposed control law for the desired thrust vector $T\mathbf{R}_d\hat{\mathbf{z}}$ is
\begin{equation}\label{eq:Tcontrollaw}
T\mathbf{R}_d\hat{\mathbf{z}}=T_t\hat{\mathbf{t}}+T_g\hat{\mathbf{z}}+T_p\hat{\mathbf{r}},
\end{equation}
where $T_t\in\mathbb{R}$ is the tangential term that we use to control the position of the UAV, $T_g=mg\in\mathbb{R}_{> 0}$ is a constant gravity compensation term, and $T_p\in(T_{c,min},T_{max}-mg)$ is a constant pulling term on the cable. 
To control $T_t\hat{\mathbf{t}}$, we use the PD control law \cite{bullo1995control},
\begin{equation}\label{eq:Ttlaw}
T_t\hat{\mathbf{t}}=\text{dist}(\mathbf{p},\mathbf{p}_d)K_{p,t}
\left[
\begin{matrix}
\langle \hat{\mathbf{t}},\hat{\mathbf{x}} \rangle \\
\langle \hat{\mathbf{t}},\hat{\mathbf{y}} \rangle \\
\langle \hat{\mathbf{t}},\hat{\mathbf{z}} \rangle
\end{matrix}
\right]
- K_{d,t} \dot{\mathbf{p}},
\end{equation}
where 
\begin{equation}\label{eq:tunit}
\begin{aligned}
\hat{\mathbf{t}}=&\dfrac{(\mathbf{p} \times \mathbf{p}_d)\times \mathbf{p}}{\max\{\lVert (\mathbf{p} \times \mathbf{p}_d)\times \mathbf{p} \lVert,\mu\}}
\end{aligned}
\end{equation}
is the unit gradient of the geodesic path ($\mu>0$),
\begin{equation}\label{eq:dist}
\text{dist}(\mathbf{p},\mathbf{p}_d):=L\arccos(\langle \mathbf{p}/L,\mathbf{p}_d/L \rangle)
\end{equation}
is the great-circle distance between $\mathbf{p}$ and $\mathbf{p}_d$, and $K_{p,t},K_{d,t}\in\mathbb{R}_{>0}$.

In the following lemma, it can be proven that (\ref{eq:pdynamicsideal}) controlled by (\ref{eq:Tcontrollaw}) and (\ref{eq:Ttlaw}) is exponentially stable considering an ideal attitude dynamics.
\begin{lemma}\label{lem:outer}
Consider System (\ref{eq:positiondynamics}),(\ref{eq:pconstraint}) controlled by (\ref{eq:Tcontrollaw}) and (\ref{eq:Ttlaw}). For an ideal attitude dynamics, the equilibrium point $(\mathbf{p},\dot{\mathbf{p}})=(\mathbf{p}_d,0)$ is exponentially stable for any initial condition satisfying
\begin{equation}\label{eq:Kpcondition}
K_{p,t}>\dfrac{\lVert \dot{\mathbf{p}}(0) \lVert^2}{\pi^2-\text{dist}(\mathbf{p}(0),\mathbf{p}_d)^2}.
\end{equation}
\end{lemma}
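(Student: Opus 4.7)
The plan is a mechanical-energy Lyapunov argument on the reduced spherical dynamics, combined with a forward-invariance step that avoids the antipode of $\mathbf{p}_d$ and a final linearization to upgrade asymptotic convergence to an exponential rate. First I would substitute \eqref{eq:Tcontrollaw} and \eqref{eq:Ttlaw} into \eqref{eq:pdynamicsideal}: because $T_g=mg$ the gravity term cancels, so the closed loop reads $m\ddot{\mathbf{p}}=K_{p,t}\,\text{dist}(\mathbf{p},\mathbf{p}_d)\,\hat{\mathbf{t}}-K_{d,t}\dot{\mathbf{p}}+T_p\hat{\mathbf{r}}$. The cable constraint gives $\langle\hat{\mathbf{r}},\dot{\mathbf{p}}\rangle=0$ and, by construction, $\langle\hat{\mathbf{r}},\hat{\mathbf{t}}\rangle=0$, so the pulling term $T_p\hat{\mathbf{r}}$ is invisible to any tangential energy balance and the constraint force determined by \eqref{eq:positiondynamicswithtension} adjusts $T_c$ silently.

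The natural candidate Lyapunov function is $V=\tfrac{1}{2}m\lVert\dot{\mathbf{p}}\lVert^2+\tfrac{1}{2}K_{p,t}\,\text{dist}(\mathbf{p},\mathbf{p}_d)^2$. The key computation is the chain-rule identity $\tfrac{d}{dt}\text{dist}(\mathbf{p},\mathbf{p}_d)=-\langle\hat{\mathbf{t}},\dot{\mathbf{p}}\rangle$, obtained by differentiating \eqref{eq:dist} and plugging in the explicit form \eqref{eq:tunit} of $\hat{\mathbf{t}}$; together with the closed-loop equation it cancels the proportional terms and leaves $\dot V=-K_{d,t}\lVert\dot{\mathbf{p}}\lVert^2\leq 0$. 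LaSalle's invariance principle then reduces the analysis to the set where $\dot{\mathbf{p}}\equiv 0$, on which the closed-loop equation forces $\text{dist}(\mathbf{p},\mathbf{p}_d)\hat{\mathbf{t}}\equiv 0$ and hence $\mathbf{p}=\mathbf{p}_d$.

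The main obstacle, and the only place where \eqref{eq:Kpcondition} enters, is the forward-invariance step: both the identity $\tfrac{d}{dt}\text{dist}=-\langle\hat{\mathbf{t}},\dot{\mathbf{p}}\rangle$ and the continuity of $\hat{\mathbf{t}}$ break down at the antipode $-\mathbf{p}_d$, so LaSalle cannot be applied unless the trajectory is kept away from it. Monotonicity of $V$ traps the state in the sublevel set $\{V\leq V(0)\}$, and \eqref{eq:Kpcondition} is engineered so that this sublevel set lies strictly inside the smooth region $\{\text{dist}(\cdot,\mathbf{p}_d)<L\pi\}$ where $\hat{\mathbf{t}}$ coincides with the true geodesic gradient (i.e., the $\mu$ cap in \eqref{eq:tunit} is inactive). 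Verifying this containment is where most of the bookkeeping will go.

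Finally, to promote LaSalle-based asymptotic convergence to an exponential rate I would linearize the closed-loop dynamics at $(\mathbf{p}_d,0)$ using normal coordinates $\eta$ on the tangent plane at $\mathbf{p}_d$: the linearization $m\ddot\eta=-K_{p,t}\eta-K_{d,t}\dot\eta$ is a Hurwitz damped oscillator, so the origin is locally exponentially stable, and combining this local rate with the global basin-wide convergence from LaSalle yields exponential stability on the set defined by \eqref{eq:Kpcondition}. As an alternative that avoids invoking linearization, a small cross term $\epsilon\,m\,\text{dist}(\mathbf{p},\mathbf{p}_d)\langle\hat{\mathbf{t}},\dot{\mathbf{p}}\rangle$ added to $V$ should produce a strict Lyapunov function with $\dot V_\epsilon\leq-\lambda V_\epsilon$ on the invariant sublevel set, giving the exponential estimate directly.
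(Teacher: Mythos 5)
Your reduction coincides with the paper's first step: with an ideal inner loop, $T_g\hat{\mathbf{z}}$ cancels gravity and $T_p\hat{\mathbf{r}}$ is absorbed by the cable reaction, leaving the PD-on-the-sphere dynamics $m\ddot{\mathbf{p}}=T_t\hat{\mathbf{t}}$ subject to \eqref{eq:pconstraint}. From there the paper does no further analysis: it invokes \cite[Theorem~4]{bullo1995control}, which asserts exponential stability of exactly this closed loop under condition \eqref{eq:Kpcondition}. Your proposal instead re-derives that theorem from scratch (energy function, $\dot V=-K_{d,t}\lVert\dot{\mathbf{p}}\rVert^2$, invariance of the sublevel set to exclude the antipodal equilibrium $-\mathbf{p}_d$, then a strictification step). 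That is a legitimate and more self-contained route — in substance it reconstructs Bullo--Murray's own argument, and your cross-term idea is precisely the device the paper itself reuses in Appendix~\ref{proof:outerloop} (the function \eqref{eq:lyapunovBullo} with the term \eqref{eq:cross}) — so what it buys is independence from the citation, at the price of the containment bookkeeping you acknowledge (note that $V(0)<\tfrac12 K_{p,t}(L\pi)^2$ gives $K_{p,t}>m\lVert\dot{\mathbf{p}}(0)\rVert^2/(L^2\pi^2-\text{dist}(\mathbf{p}(0),\mathbf{p}_d)^2)$, which matches \eqref{eq:Kpcondition} only in the cited paper's normalization, so the constants deserve the check you promise).

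Two slips to correct in the execution. First, the sign of the cross term: since $\tfrac{d}{dt}\text{dist}(\mathbf{p},\mathbf{p}_d)=-\langle\hat{\mathbf{t}},\dot{\mathbf{p}}\rangle$ and $m\ddot{\mathbf{p}}=K_{p,t}\,\text{dist}(\mathbf{p},\mathbf{p}_d)\hat{\mathbf{t}}-K_{d,t}\dot{\mathbf{p}}+(\text{radial terms})$, adding $+\epsilon m\,\text{dist}(\mathbf{p},\mathbf{p}_d)\langle\hat{\mathbf{t}},\dot{\mathbf{p}}\rangle$ to $V$ contributes $+\epsilon K_{p,t}\,\text{dist}(\mathbf{p},\mathbf{p}_d)^2$ to $\dot V_\epsilon$, which destroys negativity; you need the term with a minus sign, as in \eqref{eq:lyapunovBullo}, to generate the $-\epsilon K_{p,t}\,\text{dist}^2$ contribution. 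Second, your primary route (LaSalle plus linearization at $(\mathbf{p}_d,0)$) only delivers asymptotic stability with a local exponential rate; to obtain exponential stability on the whole set defined by \eqref{eq:Kpcondition}, which is what the lemma (and the cited theorem) claims, you should carry through the corrected strict Lyapunov function on the invariant sublevel set rather than patching LaSalle with a linearization. Finally, a small caveat: the $\mu$-regularization in \eqref{eq:tunit} saturates wherever $\lVert(\mathbf{p}\times\mathbf{p}_d)\times\mathbf{p}\rVert<\mu$, which happens near $\mathbf{p}_d$ itself and not only at the antipode, so the statement that the cap is inactive on the sublevel set is not literally true; the paper sidesteps this by treating $\hat{\mathbf{t}}$ as the exact unit geodesic gradient when citing the theorem, and your self-contained proof would either have to do the same or handle the saturated neighborhood explicitly.
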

\begin{proof}
Using the control law (\ref{eq:Tcontrollaw}) in (\ref{eq:pdynamicsideal}), we obtain
\begin{equation}\label{eq:pdynamicscontrolled}
\begin{cases}
m\ddot{\mathbf{p}}=T_t\hat{\mathbf{t}} + T_g\hat{\mathbf{z}} + T_p\hat{\mathbf{r}} - mg\hat{\mathbf{z}},\\
\text{subject to (\ref{eq:pconstraint})}.
\end{cases}
\end{equation}
Since the attitude dynamics is ideal, $T_g\hat{\mathbf{z}}$ cancels $mg\hat{\mathbf{z}},$ and $T_p\hat{\mathbf{r}}$ is cancelled out by the reaction force $-T_c\hat{\mathbf{r}}$ at all times.
As a consequence, (\ref{eq:pdynamicscontrolled}) can be rewritten as
\begin{equation}\label{eq:pdynamicscontrolled2}
\begin{cases}
m\ddot{\mathbf{p}}=T_t\hat{\mathbf{t}},\\
\text{subject to (\ref{eq:pconstraint})}.
\end{cases}
\end{equation}
Using (\ref{eq:Ttlaw}) in (\ref{eq:pdynamicscontrolled2}), it follows from \cite[Theorem 4]{bullo1995control} that the closed loop system is exponentially stable for any initial condition satisfying (\ref{eq:Kpcondition}). It is worth noting that the stability results of the point of equilibrium are semi-global. Indeed, it follows from Eq. \eqref{eq:Kpcondition} that, for any initial position belonging to the spherical dome, there exists a sufficiently large $K_{p,t}$ such that the system trajectories will exponentially tend to $\mathbf{p}_d$. \begin{flushright}$\square$\end{flushright}
\end{proof}

Next, we compute 
the thrust $T$ and the desired rotation matrix $\mathbf{R}_d$.
First, decompose $T\mathbf{R}_d\hat{\mathbf{z}}$ as
\begin{equation}
T\mathbf{R}_d\hat{\mathbf{z}}=T_{d,x}\hat{\mathbf{x}}+T_{d,y}\hat{\mathbf{y}}+T_{d,z}\hat{\mathbf{z}},
\end{equation}
where $T_{d,x}:=T_t\langle \hat{\mathbf{t}},\hat{\mathbf{x}} \rangle + T_p \langle \hat{\mathbf{r}},\hat{\mathbf{x}} \rangle$, $T_{d,y}:=T_t\langle \hat{\mathbf{t}},\hat{\mathbf{y}} \rangle + T_p \langle \hat{\mathbf{r}},\hat{\mathbf{y}} \rangle$, and $T_{d,z}:=T_t\langle \hat{\mathbf{t}},\hat{\mathbf{z}} \rangle + T_p \langle \hat{\mathbf{r}},\hat{\mathbf{z}} \rangle + T_g$. 
Accordingly, $T$ can be computed as
\begin{equation}\label{eq:Tnorm}
T=\sqrt{T_{d,x}^2+T_{d,y}^2+T_{d,z}^2}.
\end{equation}

Concerning $\mathbf{R}_d$, consider that $\mathbf{R}_d$ is parameterized by the quaternion $\mathbf{q}_d$.
The particular solution $\mathbf{q}_{\zeta}:=\left[\begin{matrix}q_{\zeta,0} & \mathbf{q}_{\zeta,v}^T\end{matrix}\right]^T$ corresponds to the minimal rotation between $\hat{\mathbf{z}}$ and $T\mathbf{R}_d\hat{\mathbf{z}}$. Define $\zeta_d\in[-\pi,\pi)$ as the angle between $\hat{\mathbf{z}}$ and $T\mathbf{R}_d\hat{\mathbf{z}}$ by
\begin{equation}\label{eq:zeta}
\zeta_d=\arctan 2\left(\sqrt{T_{d,x}^2+T_{d,y}^2},T_{d,z}\right).
\end{equation}
The particular solution $\mathbf{q}_{\zeta}$ is computed by\footnote{If $T_{d,x}=0\wedge T_{d,y}=0$, we have $q_{\zeta,0}=1$ and $\mathbf{q}_{\zeta,v}=\left[\begin{matrix}0&0&0\end{matrix}\right]^T$.}
\begin{eqnarray}\label{eq:qzeta}
q_{\zeta,0}=&\cos{\dfrac{\zeta_d}{2}},\\
\mathbf{q}_{\zeta,v}=&\dfrac{\sin{\frac{\zeta_d}{2}}}{\sqrt{T_{d,x}^2+T_{d,y}^2}}
\left[\begin{matrix}
T_{d,y}\\
T_{d,x}\\
0
\end{matrix}\right].
\end{eqnarray}
The desired quaternion $\mathbf{q}_d$ is the combination of an arbitrary rotation $\psi\in[-\pi,\pi)$ about $\hat{\mathbf{z}}$ and the minimal rotation $\zeta_d$, that is,
\begin{equation}
\left[
\begin{matrix}
q_{d,0}\\
\mathbf{q}_{d,v}
\end{matrix}
\right]
=
\left[
\begin{matrix}
q_{\zeta,0} & -\mathbf{q}_{\zeta,v}^T\\
\mathbf{q}_{\zeta,v} & q_{\zeta,0}\mathbf{I}_3+\mathbf{q}_{\zeta,v}^{\wedge}
\end{matrix}
\right]
\left[
\begin{matrix}
q_{\psi,0}\\
\mathbf{q}_{\psi,v}
\end{matrix}
\right],
\end{equation}
where $q_{\psi,0}=\cos\dfrac{\psi}{2}$ and $\mathbf{q}_{\psi,v}=\sin\dfrac{\psi}{2}\hat{\mathbf{z}}$.

\subsection{Presence of Attitude Dynamics}

Here, we study under which conditions stability is preserved in the presence of attitude dynamics.

\subsubsection{Inner and Outer Loop Dynamics}

Define the error quaternion $\tilde{\mathbf{q}}:=\left[\begin{matrix}\tilde{q}_0 & \tilde{\mathbf{q}}_v^T\end{matrix}\right]^T$ as
\begin{equation}
\tilde{\mathbf{q}}:=\mathcal{R}^{-1}(\tilde{\mathbf{R}}(\tilde{\mathbf{q}})),
\end{equation}
where
\begin{equation}\label{eq:rotationerror}
\tilde{\mathbf{R}}(\tilde{\mathbf{q}}):=\mathbf{R}^T(\mathbf{q})\mathbf{R}_d(\mathbf{q}_d)
\end{equation}
is the attitude error and $\mathcal{R}^{-1}$ is the inverse Euler-Rodrigues operator \cite{dai2015euler}.
To control the UAV attitude, we use the PD control law
\begin{equation}\label{eq:innercontrollaw}
\pmb{\tau}=K_{p,q}\tilde{\mathbf{q}}_v-K_{d,q}\pmb{\omega},
\end{equation}
where $K_{p,q},K_{d,q}\in\mathbb{R}_{> 0}$ are positive scalars.
The inner loop attitude dynamics can be reformulated as
\begin{align}\label{eq:innerloopdynamics}
\begin{cases}
\dot{\tilde{\mathbf{q}}}=\dfrac{1}{2}E(\tilde{\mathbf{q}})(\pmb{\omega}-\pmb{\omega}_d)\\
\mathcal{\mathbf{J}}\dot{\pmb{\omega}}=-\pmb{\omega}^{\wedge}\mathcal{\mathbf{J}}\pmb{\omega}+K_{p,q}\tilde{\mathbf{q}}_v-K_{d,q}\pmb{\omega},
\end{cases}
\end{align}
where $\pmb{\omega}_d$ can be seen as an exogenous disturbance injected by the outer loop and is the rate of change of the desired attitude $\mathbf{R}_d$.

Regarding the outer loop, we isolate $\tilde{\mathbf{R}}$ from $\mathbf{R}_d$ manipulating (\ref{eq:rotationerror}) as
\begin{equation}\label{eq:Rexplicit}
\mathbf{R}=\mathbf{R}_d + \mathbf{R}_d (\tilde{\mathbf{R}}^T - \mathbf{I}_3).
\end{equation}
It is worth noting that $\mathbf{R}_d (\tilde{\mathbf{R}}^T - \mathbf{I}_3)$ tends to zero when $\tilde{\mathbf{R}}^T \to \mathbf{I}_3$. As a consequence, we can rewrite $T\mathbf{R}\hat{\mathbf{z}}$ as
\begin{equation}\label{eq:Tsplit}
T\mathbf{R}\hat{\mathbf{z}} = T\mathbf{R}_d \hat{\mathbf{z}} + T \mathbf{R}_d (\tilde{\mathbf{R}}^T - \mathbf{I}_3)\hat{\mathbf{z}}.
\end{equation}
Then, using (\ref{eq:Tcontrollaw}), (\ref{eq:Ttlaw}), and (\ref{eq:Tsplit}) in (\ref{eq:positiondynamics}), the outer loop dynamic can be rewritten as
\begin{equation}\label{eq:outerloop}
\begin{cases}
m\ddot{\mathbf{p}}=T_t\hat{\mathbf{t}} + \pmb{\delta}_{\tilde{\zeta}},\\
\text{subject to (\ref{eq:pconstraint})},
\end{cases}
\end{equation}
where $\pmb{\delta}_{\tilde{\zeta}}:=T \mathbf{R}_d(\tilde{\mathbf{R}}^T-\mathbf{I}_3) \hat{\mathbf{z}}$ can be seen as an exogenous disturbance injected by the inner loop dynamics.
The following lemma proves that $\pmb{\delta}_{\tilde{\zeta}}$ can be bounded by a function of class-$\mathcal{K}$ in $\tilde{\zeta}$, where $\tilde{\zeta}$ is the angle associated to the error quaternion $\tilde{\mathbf{q}}$.
\begin{lemma}\label{lem:Kfunction}
The norm of the exogenous input $\lVert \pmb{\delta}_{\tilde{\zeta}} \lVert$ is bounded by the class-$\mathcal{K}$ function $\lVert \pmb{\delta}_{\tilde{\zeta}} \lVert \leq \sqrt{6}T | \tilde{\zeta} |$.
\end{lemma}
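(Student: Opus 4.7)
The plan is to reduce the bound on $\lVert \pmb{\delta}_{\tilde{\zeta}} \rVert$ to a bound on the Frobenius norm of $\tilde{\mathbf{R}}^T - \mathbf{I}_3$, and then to make $\tilde{\zeta}$ explicit through the well-known trace identity $\mathrm{tr}(\tilde{\mathbf{R}}) = 1 + 2\cos\tilde{\zeta}$ for a rotation of angle $\tilde{\zeta}$ in $\mathbb{R}^3$. Because every orthogonal matrix has Frobenius norm $\sqrt{3}$, factoring out $\mathbf{R}_d$ and $\hat{\mathbf{z}}$ via submultiplicativity isolates the single quantity that actually carries $\tilde{\zeta}$.

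Concretely, I would begin from the chain
\[
\lVert \pmb{\delta}_{\tilde{\zeta}} \rVert \leq T\,\lVert \mathbf{R}_d \rVert_F\,\lVert \tilde{\mathbf{R}}^T - \mathbf{I}_3 \rVert_F\,\lVert \hat{\mathbf{z}} \rVert,
\]
where $\lVert \mathbf{R}_d \rVert_F = \sqrt{3}$ (since $\mathbf{R}_d^T \mathbf{R}_d = \mathbf{I}_3$) and $\lVert \hat{\mathbf{z}} \rVert = 1$. The middle Frobenius norm is then computed via the trace identity
\[
\lVert \tilde{\mathbf{R}}^T - \mathbf{I}_3 \rVert_F^2 = \mathrm{tr}\bigl((\tilde{\mathbf{R}} - \mathbf{I}_3)(\tilde{\mathbf{R}}^T - \mathbf{I}_3)\bigr) = 6 - 2\,\mathrm{tr}(\tilde{\mathbf{R}}),
\]
using $\tilde{\mathbf{R}}\tilde{\mathbf{R}}^T = \mathbf{I}_3$ and $\mathrm{tr}(\tilde{\mathbf{R}}) = \mathrm{tr}(\tilde{\mathbf{R}}^T)$. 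Substituting $\mathrm{tr}(\tilde{\mathbf{R}}) = 1 + 2\cos\tilde{\zeta}$ yields $\lVert \tilde{\mathbf{R}}^T - \mathbf{I}_3 \rVert_F^2 = 4 - 4\cos\tilde{\zeta} = 8\sin^2(\tilde{\zeta}/2)$, and the elementary bound $|\sin(\tilde{\zeta}/2)| \leq |\tilde{\zeta}|/2$ gives $\lVert \tilde{\mathbf{R}}^T - \mathbf{I}_3 \rVert_F \leq \sqrt{2}\,|\tilde{\zeta}|$. Chaining the inequalities produces $\lVert \pmb{\delta}_{\tilde{\zeta}} \rVert \leq T\sqrt{3}\cdot\sqrt{2}\,|\tilde{\zeta}| = \sqrt{6}\,T\,|\tilde{\zeta}|$, and class-$\mathcal{K}$ membership of the bound is automatic since it is continuous, vanishes at $\tilde{\zeta}=0$, and is strictly increasing in $|\tilde{\zeta}|$.

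The only step that is not a routine norm manipulation is the identification $\mathrm{tr}(\tilde{\mathbf{R}}) = 1 + 2\cos\tilde{\zeta}$, which I view as the main (though still modest) obstacle. The cleanest route, consistent with the paper's quaternion framework, is to use $\tilde{q}_0 = \cos(\tilde{\zeta}/2)$ together with the Euler--Rodrigues formula, from which $\mathrm{tr}(\tilde{\mathbf{R}}) = 4\tilde{q}_0^2 - 1 = 1 + 2\cos\tilde{\zeta}$ follows immediately. The final constant $\sqrt{6}$ then comes precisely from the product $\sqrt{3}\cdot\sqrt{2}$ of the two Frobenius estimates.
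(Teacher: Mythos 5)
Your proposal is correct, and it reaches the stated bound by a genuinely different route than the paper. The paper keeps $\lVert \mathbf{R}_d \rVert = 1$ (operator norm) and then explicitly expands the relevant column of $\tilde{\mathbf{R}}-\mathbf{I}_3$ through the angle-axis (Euler--Rodrigues) formula, relaxes the axis components $a_x,a_y,a_z$ to $1$, and linearizes the resulting trigonometric expression to get $\sqrt{6}T|\tilde{\zeta}|$; you instead bound the whole matrix in Frobenius norm via the trace identity $\mathrm{tr}(\tilde{\mathbf{R}})=1+2\cos\tilde{\zeta}$ (equivalently $4\tilde{q}_0^2-1$), obtaining $\lVert \tilde{\mathbf{R}}^T-\mathbf{I}_3\rVert_F = 2\sqrt{2}\,|\sin(\tilde{\zeta}/2)| \le \sqrt{2}\,|\tilde{\zeta}|$, and pay a factor $\sqrt{3}$ for $\lVert\mathbf{R}_d\rVert_F$ so that the product is again $\sqrt{6}$. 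Your chain of inequalities is valid ($\lVert ABx\rVert \le \lVert A\rVert_F\lVert B\rVert_F\lVert x\rVert$ holds), and your argument is coordinate-free and avoids the paper's slightly informal step that the slope of its bounding function is maximal at $\tilde{\zeta}=0$, replacing it with the global estimate $|\sin(\tilde{\zeta}/2)|\le|\tilde{\zeta}|/2$. What you give up is sharpness: because $\pmb{\delta}_{\tilde{\zeta}}$ only involves one column of $\tilde{\mathbf{R}}^T-\mathbf{I}_3$ and $\mathbf{R}_d$ is an isometry, the paper's column-wise computation could (if carried out tightly) yield $2T|\sin(\tilde{\zeta}/2)| \le T|\tilde{\zeta}|$, a constant the Frobenius route cannot recover since it charges $\sqrt{3}$ for $\mathbf{R}_d$ and bounds the full matrix; for the lemma as stated, however, $\sqrt{6}T|\tilde{\zeta}|$ is exactly what is needed and your proof establishes it.
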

\begin{proof}
The norm $\lVert \pmb{\delta}_{\tilde{\zeta}} \lVert$ is bounded by
\begin{equation}\label{eq:deltanorm}
\lVert \pmb{\delta}_{\tilde{\zeta}} \lVert \leq T \lVert \mathbf{R}_d \lVert \lVert \tilde{\mathbf{R}}_{I_z} \lVert,
\end{equation}
where $\tilde{\mathbf{R}}_{I_z}:=(\tilde{\mathbf{R}}-\mathbf{I}_3)\hat{z}$ is the last column of $\tilde{\mathbf{R}}-\mathbf{I}_3$ and $\lVert \mathbf{R}_d \lVert =1$ by definition. As a consequence, (\ref{eq:deltanorm}) becomes
\begin{equation}
\lVert \pmb{\delta}_{\tilde{\zeta}} \lVert \leq T \lVert \tilde{\mathbf{R}}_{I_z} \lVert.
\end{equation}
Next, we use
\begin{eqnarray}\label{eq:axisrepresentation}
\begin{cases}
\tilde{q}_0:=\cos\left(\frac{\tilde{\zeta}}{2}\right)\\
\tilde{\mathbf{q}}_v:=\sin\left(\frac{\tilde{\zeta}}{2}\right)\left[\begin{matrix}a_x \\ a_y \\ a_z\end{matrix}\right],
\end{cases}
\end{eqnarray}
where $a_x$, $a_y$, and $a_z$ are the $x$, $y$, and $z$-components of the normalized axis of rotation, respectively.
Developing the last column of $\tilde{\mathbf{R}}_{I_z}$ using (\ref{eq:axisrepresentation}), we obtain
\begin{align}\label{eq:deltanorm2}
\begin{split}
\lVert \pmb{\delta}_{\tilde{\zeta}} \lVert & \leq T \sqrt{(a_x a_z (1-\cos\tilde{\zeta})-a_y\sin\tilde{\zeta})^2+(a_y a_z(1-\cos\tilde{\zeta})+a_x\sin\tilde{\zeta})^2+(\cos\tilde{\zeta}+a_z^2(1-\cos\tilde{\zeta})-1))^2}\\
& \leq T\sqrt{(a_z^4-2a_z^2+1+a_x^2a_z^2+a_y^2a_z^2)(1-\cos\tilde{\zeta})^2+(a_x^2+a_y^2)\sin^2\tilde{\zeta}}.
\end{split}
\end{align}
Then, since $a_x\leq 1$, $a_y\leq 1$, and $a_z\leq 1$, \eqref{eq:deltanorm2} is upperbounded by 
\begin{equation}\label{eq:deltanorm3}
\lVert \pmb{\delta}_{\tilde{\zeta}} \lVert \leq T \sqrt{4(1-\cos^2|\tilde{\zeta}|)+2\sin^2|\tilde{\zeta}}|:=f(|\tilde{\zeta}|).
\end{equation}
Note that $f(0)=0$ and that $\dfrac{\partial f(|\tilde{\zeta}|)}{\partial \tilde{\zeta}}$ is maximal when $|\tilde{\zeta}|=0$. 
Therefore, (\ref{eq:deltanorm3}) is upperbounded by the linear class-$\mathcal{K}$ function
\begin{equation}
\lVert \pmb{\delta}_{\tilde{\zeta}} \lVert \leq \sqrt{6}T|\tilde{\zeta}|,
\end{equation}
which concludes the proof. \begin{flushright}$\square$\end{flushright}
\end{proof}

\subsubsection{Stability Properties}

The following proposition proves that the inner loop is input-to-state stable (ISS) with respect to $\pmb{\omega}_d,$ and the asymptotic gain can be made arbitrarily small.
\begin{proposition}\label{lem:innerloop}
Consider the inner loop (\ref{eq:innerloopdynamics}). Then, given $K_d\propto\sqrt{K_{p,q}}$, the system is ISS with respect to the disturbance $\pmb{\omega}_d$ and there exists an asymptotic gain $\gamma_{in}$ between $\pmb{\omega}_d$ and $\tilde{\zeta}$, which can be made arbitrarily small for sufficiently large $K_{p,q}$. 
\end{proposition}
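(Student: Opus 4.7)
The plan is to establish ISS via a composite Lyapunov analysis of the closed-loop inner dynamics (\ref{eq:innerloopdynamics}), following the standard treatment of quaternion-based PD attitude controllers (e.g., \cite{mayhew2009robust}). I would begin with the energy-like candidate
$$V_0(\tilde{\mathbf{q}},\pmb{\omega}) = \tfrac{1}{2}\pmb{\omega}^T\mathcal{J}\pmb{\omega} + 2K_{p,q}(1-\tilde{q}_0),$$
which is positive definite away from the antipodal equilibrium $\tilde{q}_0 = -1$. Using the skew-symmetry $\pmb{\omega}^T\pmb{\omega}^{\wedge}\mathcal{J}\pmb{\omega}=0$ and the quaternion kinematics, the derivative $\dot V_0$ along (\ref{eq:innerloopdynamics}) reduces to $-K_{d,q}\lVert\pmb{\omega}\lVert^2$ plus a linear forcing in the exogenous input $\pmb{\omega}_d$, which already delivers boundedness of closed-loop trajectories but not an exponential decay of the quaternion error.

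To promote $V_0$ to a strict Lyapunov function---and thereby obtain a Lyapunov-based ISS estimate---I would augment it with a Chetaev-type cross term of the form $\epsilon\,\tilde{q}_0\,\tilde{\mathbf{q}}_v^T\mathcal{J}\pmb{\omega}$, with $\epsilon>0$ small enough that positive definiteness is preserved (which requires $\epsilon\,\lambda_{\max}(\mathcal{J})$ to be small compared with $\sqrt{K_{p,q}\,\lambda_{\min}(\mathcal{J})}$ via Young's inequality). Differentiating the augmented candidate $V = V_0 + \epsilon\,\tilde{q}_0\,\tilde{\mathbf{q}}_v^T\mathcal{J}\pmb{\omega}$ along (\ref{eq:innerloopdynamics}), expanding $\dot{\tilde{\mathbf{q}}}_v$ from the kinematic equation, and absorbing mixed terms with Young's inequality yields, on any sublevel set of $V$ that excludes $\tilde{q}_0 = -1$, a dissipation inequality
$$\dot V \leq -\alpha_\omega\,\lVert\pmb{\omega}\lVert^2 - \alpha_q\,\lVert\tilde{\mathbf{q}}_v\lVert^2 + \beta\,\lVert\pmb{\omega}_d\lVert^2,$$
where $\alpha_\omega,\alpha_q,\beta$ are explicit functions of $K_{p,q}, K_{d,q}, \epsilon$, and the inertia bounds. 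Standard ISS arguments then deliver an ISS estimate on $(\tilde{\mathbf{q}}_v,\pmb{\omega})$ with respect to $\pmb{\omega}_d$, and the bound $\lVert\tilde{\mathbf{q}}_v\lVert=|\sin(\tilde{\zeta}/2)|\ge|\tilde{\zeta}|/\pi$ for $|\tilde{\zeta}|\le\pi$, obtained from (\ref{eq:axisrepresentation}), transfers the gain to the scalar angle $\tilde{\zeta}$.

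The decisive---and hardest---part of the proof is showing that the asymptotic gain $\gamma_{in}$ can be rendered arbitrarily small by enlarging $K_{p,q}$. With the prescribed scaling $K_{d,q}=c\sqrt{K_{p,q}}$, the natural matching choice is $\epsilon=\tilde{c}\sqrt{K_{p,q}}$ with $\tilde{c}$ small but independent of $K_{p,q}$, so that $V$ remains positive definite uniformly in $K_{p,q}$. A careful accounting of the dissipation coefficients then reveals that both $\alpha_\omega$ and $\alpha_q$ scale as $\sqrt{K_{p,q}}$ whereas $\beta$ remains bounded, so that the ISS gain from $\pmb{\omega}_d$ to $\lVert\tilde{\mathbf{q}}_v\lVert$ (and hence to $\tilde{\zeta}$) behaves as $\mathcal{O}(K_{p,q}^{-1/2})$ and collapses to zero as $K_{p,q}\to\infty$. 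The most delicate step is verifying that the contribution of the Chetaev cross term to $\dot V$ does not erode the $K_{d,q}$ damping when both $\epsilon$ and $K_{d,q}$ scale as $\sqrt{K_{p,q}}$; this imposes an upper bound on $\tilde{c}$ that is independent of $K_{p,q}$ and pins down the admissible tuning window.
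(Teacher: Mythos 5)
Your overall route is the same as the paper's: the paper also takes the energy term $2K_{p,q}(1-\tilde q_0)+\tfrac{1}{2}\pmb{\omega}^T\mathcal{J}\pmb{\omega}$, strictifies it with a cross term (there $2\eta\,\tilde{\mathbf{q}}_v^T\mathcal{J}\pmb{\omega}$ together with an added $2\eta K_{d,q}\lVert\tilde{\mathbf{q}}_v\rVert^2$, with a fixed weight $\eta$ constrained as in \eqref{eq:etaquat}), derives an ISS implication from a quadratic-form bound on $\dot V$, and then reads off the dependence of the gain on $K_{p,q}$ under $K_{d,q}\propto\sqrt{K_{p,q}}$; your $\tilde q_0$-weighted cross term with $\epsilon\propto\sqrt{K_{p,q}}$ plays exactly the role of the paper's $\eta$-terms, and the conversion $\lVert\tilde{\mathbf{q}}_v\rVert=|\sin(\tilde\zeta/2)|$ to the angle is the same.

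The step that does not hold as you state it is the coefficient accounting in your ``decisive'' part. Differentiating $2K_{p,q}(1-\tilde q_0)$ produces a forcing term of the form $K_{p,q}\,\tilde{\mathbf{q}}_v^T\pmb{\omega}_d$, i.e.\ a disturbance term weighted by $K_{p,q}$ itself. Absorbing it by Young's inequality into $-\tfrac{\alpha_q}{2}\lVert\tilde{\mathbf{q}}_v\rVert^2$ forces $\beta\gtrsim K_{p,q}^2/\alpha_q$, so the pair ``$\alpha_q\propto\sqrt{K_{p,q}}$ and $\beta$ bounded'' is unattainable; and if one insists on $\alpha_q\propto\sqrt{K_{p,q}}$, then $\beta\propto K_{p,q}^{3/2}$ and the resulting gain does not shrink at all, so the proposition would not follow. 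The argument is rescued by your own construction: with $\epsilon\propto\sqrt{K_{p,q}}$ the cross term contributes $-\epsilon K_{p,q}\tilde q_0\lVert\tilde{\mathbf{q}}_v\rVert^2$, i.e.\ $\alpha_q\propto K_{p,q}^{3/2}$ (valid on a region where $\tilde q_0$ stays bounded away from zero, which---like the paper's restriction $|\tilde\zeta|<\tilde\zeta_{\max}$---makes the statement regional rather than global), while $\alpha_\omega\propto K_{d,q}\propto\sqrt{K_{p,q}}$ and $\beta\propto\sqrt{K_{p,q}}$; combining this with the lower bound $V\gtrsim K_{p,q}\lVert\tilde{\mathbf{q}}_v\rVert^2$ then yields the asymptotic gain $\mathcal{O}(K_{p,q}^{-1/2})$ you announce. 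So redo the bookkeeping with these corrected orders; the conclusion that $\gamma_{in}$ can be made arbitrarily small for large $K_{p,q}$ survives (the paper reports the nominal rate $\gamma_{in}\propto 1/K_{p,q}$ with $\eta$ held fixed, but only the fact that the gain vanishes as $K_{p,q}\to\infty$ is used in Theorem \ref{thm:smallgain}).
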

\begin{proof}
The proof is detailed in Appendix \ref{proof:innerloop}.
\end{proof}

Concerning the outer loop, the following proposition proves that the outer loop is ISS with restriction with respect to $\tilde{\zeta}$ and that the asymptotic gain is finite.
\begin{proposition}\label{lem:outerloop}
Under the assumption $||\mathbf{p}(t)||=L$ at all times, given a desired position $\mathbf{p}_d$, System (\ref{eq:outerloop}) is ISS with restriction $|\tilde{\zeta}|<\tilde{\zeta}_{max}$ and $|T| \leq T_{sup}$ with respect to $\tilde{\zeta}$.
Furthermore, the asymptotic gain $\gamma_{out}$ between the disturbance $\tilde{\zeta}$ and $\pmb{\omega}_d$ exists and is finite.
\end{proposition}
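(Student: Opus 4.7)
The plan is to exploit the cascade structure already exposed in (\ref{eq:outerloop}) and Lemma \ref{lem:Kfunction}: treat (\ref{eq:outerloop}) as the ideal closed loop of Lemma \ref{lem:outer} perturbed by $\pmb{\delta}_{\tilde{\zeta}}$, then close the loop back to $\pmb{\omega}_d$ by differentiating the construction of $\mathbf{R}_d$. Following the Bullo--Murray analysis used in \cite{bullo1995control}, I would take the energy-like Lyapunov candidate
\[
V(\mathbf{p},\dot{\mathbf{p}}) = \tfrac{m}{2}\|\dot{\mathbf{p}}\|^2 + \tfrac{1}{2}K_{p,t}\,\mathrm{dist}(\mathbf{p},\mathbf{p}_d)^2,
\]
which is positive definite and smooth on the open spherical cap excluding the antipode of $\mathbf{p}_d$. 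Since $\dot{\mathbf{p}}$ is tangent to the sphere by (\ref{eq:pconstraint}), differentiating $V$ along (\ref{eq:outerloop}) and substituting (\ref{eq:Ttlaw}) recovers the dissipation inequality of the unperturbed case plus a single cross term, namely $\dot{V} = -K_{d,t}\|\dot{\mathbf{p}}\|^2 + \langle \dot{\mathbf{p}},\pmb{\delta}_{\tilde{\zeta}}\rangle$, because the potential gradient $-K_{p,t}\mathrm{dist}(\mathbf{p},\mathbf{p}_d)\hat{\mathbf{t}}$ exactly cancels the proportional part of $T_t\hat{\mathbf{t}}$.

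To extract ISS, I would apply Young's inequality to the cross term and use Lemma \ref{lem:Kfunction} together with the restriction $|T|\leq T_{\sup}$ to obtain $\dot{V} \leq -\tfrac{K_{d,t}}{2}\|\dot{\mathbf{p}}\|^2 + \tfrac{3 T_{\sup}^2}{K_{d,t}}\tilde{\zeta}^2$. The restriction $|\tilde{\zeta}|<\tilde{\zeta}_{\max}$ then has a concrete role: it is chosen so that the largest sublevel set of $V$ attained under this worst-case injection stays strictly inside the cap $\mathrm{dist}(\mathbf{p},\mathbf{p}_d)<\pi L$, preventing the trajectory from ever approaching the antipodal singularity of $\hat{\mathbf{t}}$ and $\mathrm{dist}$. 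A standard comparison argument then yields ISS with restriction and a linear asymptotic gain $\gamma_{\mathbf{p}}$ from $|\tilde{\zeta}|$ to the pair $(\|\mathbf{p}-\mathbf{p}_d\|,\|\dot{\mathbf{p}}\|)$.

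For the second assertion, I would express $\pmb{\omega}_d$ through $\pmb{\omega}_d^\wedge = \mathbf{R}_d^T\dot{\mathbf{R}}_d$. The quaternion parameterization (\ref{eq:zeta})--(\ref{eq:qzeta}) makes $\mathbf{R}_d$ a smooth function of the desired thrust vector $T_d:=T\mathbf{R}_d\hat{\mathbf{z}}$, which in turn is a smooth function of $(\mathbf{p},\dot{\mathbf{p}})$ by (\ref{eq:Tcontrollaw})--(\ref{eq:Ttlaw}). By the chain rule, $\pmb{\omega}_d$ becomes a continuous map of $(\mathbf{p},\dot{\mathbf{p}},\ddot{\mathbf{p}})$, and on the compact invariant region produced in the previous step it is Lipschitz, so there exist constants $c_1,c_2$ with $\|\pmb{\omega}_d\|\leq c_1\|\dot{\mathbf{p}}\| + c_2\|\ddot{\mathbf{p}}\|$. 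Substituting $\ddot{\mathbf{p}}$ from (\ref{eq:outerloop}) and bounding each component through $\gamma_{\mathbf{p}}$ and Lemma \ref{lem:Kfunction} produces the finite asymptotic gain $\gamma_{\mathrm{out}}$ from $|\tilde{\zeta}|$ to $\|\pmb{\omega}_d\|$.

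The main obstacle I anticipate is controlling the Lipschitz constant of $\mathbf{R}_d$ as a function of $T_d$ in the regime where $T_{d,x}^2+T_{d,y}^2$ is close to zero, because the Jacobian of the minimal-rotation parameterization (\ref{eq:qzeta}) degenerates there. One has to argue that the combination of the gravity-compensation term $T_g=mg$, the pulling term $T_p>T_{c,\min}$, and the bounded-away-from-antipode behaviour guaranteed by $|\tilde{\zeta}|<\tilde{\zeta}_{\max}$ keeps $T_d$ uniformly away from the degenerate $\pm\hat{\mathbf{z}}$ directions along every trajectory in the invariant sublevel set, with a dedicated local analysis around the equilibrium $\mathbf{p}=\mathbf{p}_d$ where $T_t\to 0$ and only $T_g\hat{\mathbf{z}}+T_p\hat{\mathbf{r}}$ survives. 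This geometric book-keeping is what ultimately promotes the gain from ``bounded on compact sets'' to globally finite.
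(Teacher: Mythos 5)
There is a genuine gap in the core ISS step. Your Lyapunov candidate $V=\tfrac{m}{2}\|\dot{\mathbf{p}}\|^2+\tfrac{1}{2}K_{p,t}\,\mathrm{dist}(\mathbf{p},\mathbf{p}_d)^2$ indeed gives $\dot V=-K_{d,t}\|\dot{\mathbf{p}}\|^2+\langle\dot{\mathbf{p}},\pmb{\delta}_{\tilde{\zeta}}\rangle$, but after Young's inequality this yields
\begin{equation*}
\dot V\leq-\tfrac{K_{d,t}}{2}\|\dot{\mathbf{p}}\|^2+\tfrac{3T_{\sup}^2}{K_{d,t}}\tilde{\zeta}^2,
\end{equation*}
which is only negative when the \emph{velocity} is large: at states with large position error and small $\dot{\mathbf{p}}$ the bound is positive, so $\dot V$ is not negative outside any ball in the full state $(\tilde{\mathbf{p}},\dot{\mathbf{p}})$. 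A ``standard comparison argument'' therefore does not deliver ISS, nor any finite asymptotic gain from $|\tilde{\zeta}|$ to the position error — with this $V$ the unperturbed system is only handled by an invariance argument, which does not quantify robustness. This is exactly why the paper does not use the pure energy function: following \cite{bullo1995control} it takes the strict Lyapunov function $V_{out}=\tfrac{1}{2}\mathrm{dist}^2+\tfrac{h_{p,t}^{-1}}{2}\|\mathbf{v}\|^2-\epsilon\,\mathrm{Cross}$, whose cross term (for $\epsilon<16h_{d,t}/(32h_{p,t}+h_{d,t}^2\pi^2)$) produces a derivative bounded by a negative definite quadratic form in $(\|\tilde{\mathbf{p}}\|,\|\mathbf{v}\|)$ plus a term linear in $\|\pmb{\Delta}_{\tilde{\zeta}}\|$; only then does the implication ``state large relative to disturbance $\Rightarrow\dot V<0$'' hold and an explicit gain follow. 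The restriction also enters differently than you describe: in the paper it comes from the cable bound $\|\tilde{\mathbf{p}}\|\leq 2L$, which forces $\|\pmb{\Delta}_{\tilde{\zeta}}\|\leq\Delta_{\tilde{\zeta},\max}$ and, via Lemma~\ref{lem:Kfunction} and $T\leq T_{\sup}$, the restriction $|\tilde{\zeta}|<\tilde{\zeta}_{\max}$, rather than from keeping trajectories away from the antipodal singularity.

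Your treatment of the second assertion (expressing $\pmb{\omega}_d$ through $\mathbf{R}_d^T\dot{\mathbf{R}}_d$, chain rule through the thrust-vector parameterization, and the degeneracy of \eqref{eq:qzeta} near $\pm\hat{\mathbf{z}}$) is reasonable and in fact more explicit than the paper, which leaves the finiteness of $\gamma_{out}$ essentially implicit once ISS of the outer loop is established. But that part cannot stand until the first part is repaired: replace the energy function by the cross-term Lyapunov function (or otherwise produce a strict ISS-Lyapunov function for the position--velocity state) so that a genuine asymptotic gain from $|\tilde{\zeta}|$ to $(\|\tilde{\mathbf{p}}\|,\|\dot{\mathbf{p}}\|)$ exists.
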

\begin{proof}
The details of the proof can be found in Appendix \ref{proof:outerloop}.
\end{proof}

Combining Propositions \ref{lem:innerloop} and \ref{lem:outerloop}, it is possible to prove that the overall system is AS.
\begin{theorem}\label{thm:smallgain}
Consider the overall system (\ref{eq:innerloopdynamics}) and (\ref{eq:outerloop}) and assume the cable rigid.
Then, given $K_{d,q}\propto\sqrt{K_{p,q}}$, the point of equilibrium $\mathbf{p}_d$ is AS for suitably large $K_{p,q}$. 
\end{theorem}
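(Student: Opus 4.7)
The plan is to apply a nonlinear small-gain theorem to the feedback interconnection formed by the inner loop (\ref{eq:innerloopdynamics}) and the outer loop (\ref{eq:outerloop}), with Propositions~\ref{lem:innerloop} and~\ref{lem:outerloop} providing the two ISS building blocks. The inner loop contributes an asymptotic gain $\gamma_{in}$ from $\pmb{\omega}_d$ to $\tilde{\zeta}$ that can be made arbitrarily small by enlarging $K_{p,q}$ (with $K_{d,q}\propto\sqrt{K_{p,q}}$), while the outer loop contributes a fixed finite asymptotic gain $\gamma_{out}$ from $\tilde{\zeta}$ back to $\pmb{\omega}_d$. The small-gain condition $\gamma_{in}\gamma_{out}<1$ is therefore enforceable by selecting $K_{p,q}$ sufficiently large.

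Concretely, I would proceed in three steps. First, I would formalize the interconnection, treating $\pmb{\omega}_d$ as the disturbance entering the inner loop while being the relevant output of the outer loop through the time-derivative of the commanded attitude, with $\tilde{\zeta}$ playing the symmetric role. Second, I would restrict the initial conditions to a neighborhood $\Omega$ of the equilibrium on which the restrictions $|\tilde{\zeta}|<\tilde{\zeta}_{max}$ and $|T|\leq T_{sup}$ demanded by Proposition~\ref{lem:outerloop} hold; both are compatible with the equilibrium since $\tilde{\zeta}=0$ and $T=mg+T_p<T_{max}$ there by construction of (\ref{eq:Tcontrollaw}). Third, I would pick $K_{p,q}$ large enough that $\gamma_{in}\gamma_{out}<1$ and invoke the Jiang--Teel--Praly nonlinear small-gain theorem. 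This yields asymptotic decay of $(\tilde{\zeta},\pmb{\omega}_d)$ which, combined with the outer-loop analysis underlying Proposition~\ref{lem:outerloop}, forces $\mathbf{p}(t)\to\mathbf{p}_d$ and $\dot{\mathbf{p}}(t)\to 0$, yielding AS of $\mathbf{p}_d$.

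The main obstacle is the \emph{restriction bookkeeping} inherited from Proposition~\ref{lem:outerloop}: one must verify that the composite closed-loop trajectory never leaves the region on which the outer-loop ISS estimate is valid. I would handle this by a contradiction argument on the first potential exit time $t^\star$, combining the transient ISS bound of Proposition~\ref{lem:innerloop} with the linear class-$\mathcal{K}$ bound $\|\pmb{\delta}_{\tilde{\zeta}}\|\leq\sqrt{6}T|\tilde{\zeta}|$ of Lemma~\ref{lem:Kfunction}: shrinking $\Omega$ and enlarging $K_{p,q}$ drives the overshoot predicted by these bounds strictly below the restriction thresholds, contradicting the assumed exit. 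This is precisely the step that confines Theorem~\ref{thm:smallgain} to a \emph{local} result and motivates the Reference Governor augmentation announced in the introduction.
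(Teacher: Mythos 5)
Your proposal is correct and follows essentially the same route as the paper: combine Propositions~\ref{lem:innerloop} and~\ref{lem:outerloop}, choose $K_{p,q}$ (with $K_{d,q}\propto\sqrt{K_{p,q}}$) large enough that $\gamma_{in}\gamma_{out}<1$, and invoke the small-gain theorem for the interconnection, restricting initial conditions so that $|\tilde{\zeta}|<\tilde{\zeta}_{max}$ and the thrust bound hold, which yields local asymptotic stability of $\mathbf{p}_d$. Your explicit first-exit-time argument for the restriction bookkeeping is a more detailed rendering of what the paper simply asserts as the existence of a suitable set of initial conditions, but it is the same idea.
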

\begin{proof}
From Propositions \ref{lem:innerloop} and \ref{lem:outerloop}, $\gamma_{in}$ and $\gamma_{out}$ are proven to be finite under the assumption $|\tilde{\zeta}|<\tilde{\zeta}_{max}$. Since $\gamma_{in}$ can be made arbitrarily small for sufficiently large $K_{p,q}$, it is always possible to ensure $\gamma_{in}\gamma_{out}<1$. Therefore, the Small Gain Theorem \cite{khalil1996noninear} can be applied and, since there exists a suitable set of initial conditions containing the equilibrium in its interior and such that  $\lVert \tilde{\zeta}\lVert_\infty<\tilde{\zeta}_{max}$, $
\lVert T \lVert_\infty \leq T_{max},$ the point of equilibrium is asymptotically stable.
\end{proof}

The next section illustrates how to increase the set of admissible initial conditions by using a Reference Governor to manage the transient response of the closed-loop system.

\section{Constraint Enforcement}

The classical discrete-time RG \cite{bemporad1998reference} computes the next applied reference at step $k$ as
\begin{equation}\label{eq:classicalRG}
	(\mathbf{p}_a)_{k+1}=L\dfrac{(1-c)(\mathbf{p}_a)_k + c\mathbf{p}_d}{\|(1-c)(\mathbf{p}_a)_k + c\mathbf{p}_d\|},
\end{equation}
where the scalar $c\in[0,1]$ is maximized over a sufficiently long prediction time horizon $t_h$.

The Explicit Reference Governor (ERG) \cite{marcoERG2018} uses the differential equation
\begin{equation}\label{eq:ERG}
	\dot{\mathbf{p}}_a=\Delta(\mathbf{p}_a,\mathbf{p},\dot{\mathbf{p}},\mathbf{R},\pmb{\omega})\pmb{\rho}(\mathbf{p}_a,\mathbf{p}_d),
\end{equation}
where $\pmb{\rho}(\mathbf{p}_a,\mathbf{p}_d)=\dfrac{(\mathbf{p}_a\times \mathbf{p}_d)\times \mathbf{p}_a}{\max\{\| (\mathbf{p}_a\times \mathbf{p}_d)\times \mathbf{p}_a \|,\eta\}}$ is an AF constructed on the gradient of the geodesics, with $\eta\in\mathbb{R}_{>0}$ as a parameter to be tuned, and
$\Delta(\mathbf{p}_a,\mathbf{p},\dot{\mathbf{p}},\mathbf{R},\pmb{\omega})=\kappa(\hat{T}_{c,m}(\mathbf{p}_a,\mathbf{p},\dot{\mathbf{p}},\mathbf{R},\pmb{\omega})-T_{c,min}+\epsilon)^2$ is the DSM that ensures constraint \eqref{eq:Tcconstraint}, with $\kappa,\epsilon\in\mathbb{R}_{> 0}$ as parameters to be tuned.

For both RG and ERG, the time horizon $t_h$ should be chosen sufficiently long so as to catch the most relevant part of the transient.
According to the recursive feasibility property of the RG and ERG, the closed-loop system augmented with the reference governor is guaranteed to reach any feasible set-point without violating the system constraints.

\section{Conclusions}

This technical note proposes a control framework to study the stabilization of tethered Unmanned Aerial Vehicles (UAVs) in three dimensions. The constraint on the cable is modeled by a holonomic constraint and is conditioned by the positiveness of the tension in the cable. A cascade control strategy is developed with the dual objective of controlling the UAV and guaranteeing the taut cable condition. Small Gain arguments are used to prove asymptotic stability of the system. The control law is augmented with the Reference Governor (RG) to enforce constraints satisfaction at all times, and enlarge the domain of attraction of the points of equilibrium. 

\bibliography{CAT-AVIATOR}


\appendix

\section{Proof of Proposition \ref{lem:innerloop}} \label{proof:innerloop}

Consider the Lyapunov function candidate
\begin{equation}\label{eq:Vin}
	\begin{aligned}
		V(\tilde{q}_0,\tilde{\mathbf{q}}_v,\pmb{\omega}) = & 2 K_{p,q} (1 - \tilde{q}_0)  + \dfrac{1}{2} 
		\left[
			\begin{matrix}
				\tilde{\mathbf{q}}_v \\
				\pmb{\omega}
			\end{matrix}
		\right]^T
		\left[
			\begin{matrix}
				4 \eta K_{d,q}\mathbf{I}_3 & 2\eta\mathcal{\mathbf{J}} \\
				2\eta\mathcal{\mathbf{J}} & \mathcal{\mathbf{J}}
			\end{matrix}
		\right]
		\left[
			\begin{matrix}
				\tilde{\mathbf{q}}_v \\
				\pmb{\omega}
			\end{matrix}
		\right],
	\end{aligned}
\end{equation}
where 
\begin{equation}\label{eq:etaquat}
	0<\eta<\min\left\{K_{d,q}\mathcal{\mathbf{J}}^{-1},\dfrac{2K_{p,q}K_{d,q}}{4\lambda_M(\mathcal{\mathbf{J}})K_{p,q}+K_{d,q}^2}\right\} \in \mathbb{R}_{>0},
\end{equation}
is a strictly positive parameter with $\lambda_M(\mathcal{\mathbf{J}})$ denoting the maximum eigenvalue of $\mathcal{\mathbf{J}}$. 
The Lyapunov candidate \eqref{eq:Vin} is positive definite since $|\tilde{q}_0| \leq 1$, and the second term is also positive for $\eta$ satisfying \eqref{eq:etaquat}. 
Moreover, note that the point of equilibrium $(\tilde{q}_0, \tilde{\mathbf{q}}_v, \pmb{\omega})=(1, 0, 0)$ gives $V(1, 0 ,0) =0$.

The time derivative of \eqref{eq:Vin} is computed by
\begin{equation}\label{eq:dotVin}
		\dot{V}(\cdot) =  - 2 K_{p,q}\dot{\tilde{q}}_0 + 4 \eta (\tilde{\mathbf{q}}_v)^TK_{d,q}\dot{\tilde{\mathbf{q}}}_v + 2 \eta \pmb{\omega}^T \mathcal{\mathbf{J}} \dot{\tilde{\mathbf{q}}}_v
		+ 2 \eta (\mathcal{\mathbf{J}} \tilde{\mathbf{q}}_v)^T \dot{\pmb{\omega}} + (\mathcal{\mathbf{J}}\pmb{\omega})^T\dot{\pmb{\omega}},
\end{equation}
which is composed of five terms that will be treated separately in the following.
Injecting the inner loop dynamics into \eqref{eq:dotVin}, we obtain the following properties: 
\begin{itemize}
	\item \textbf{Term 1}: The first term can be rewritten as 
		\begin{equation}\label{eq:fstterm}
			- 2 K_{p,q}\dot{\tilde{q}}_0 = (\tilde{\mathbf{q}}_v)^T K_{p,q} (\pmb{\omega}-\pmb{\omega}_D).
		\end{equation}
	\item \textbf{Term 2}: The second term is computed by 
		\begin{equation}\label{eq:secondterm}
			\begin{aligned}
				4 \eta (\tilde{\mathbf{q}}_v)^T K_{d,q} \tilde{\mathbf{q}}_v \dot{\tilde{\mathbf{q}}}_v = &  (\tilde{\mathbf{q}}_v)^T (2 \eta K_{d,q} (\tilde{q}_0\mathbf{I}_3 + E(\tilde{\mathbf{q}}_v)) (\pmb{\omega} - \pmb{\omega}_D) \\
				= & (\tilde{\mathbf{q}}_v)^T (2 \eta K_{d,q} \tilde{q}_0\mathbf{I}_3) (\pmb{\omega}-\pmb{\omega}_D),
			\end{aligned}
		\end{equation}
		where the last line has been derived using $(\tilde{\mathbf{q}}_v)^T(E(\tilde{\mathbf{q}}_v)(\pmb{\omega}-\pmb{\omega}_D)) = (\tilde{\mathbf{q}}_v)^T(\tilde{\mathbf{q}}_v \times (\pmb{\omega}-\pmb{\omega}_D)) =0$, according to the property $a^T(a\times b)=b^T(a\times a)=0$;
	\item \textbf{Term 3}: The third term is computed by
		\begin{equation}\label{eq:thirdterm}
			\begin{aligned}
				2 \eta \pmb{\omega}^T \mathcal{\mathbf{J}} \dot{\tilde{\mathbf{q}}}_v 
				= & \pmb{\omega}^T(\eta \mathcal{\mathbf{J}} (\tilde{q}_0\mathbf{I}_3+E(\tilde{\mathbf{q}}_v))) (\pmb{\omega} - \pmb{\omega}_D).
			\end{aligned}
		\end{equation}
	\item \textbf{Term 4}: The fourth term can be rewritten using the fact that $\mathcal{\mathbf{J}}$ is symmetric as
		\begin{equation}
			\begin{aligned}
				2\eta(\mathcal{\mathbf{J}} \tilde{\mathbf{q}}_v)^T \dot{\pmb{\omega}} = & 2\eta (\tilde{\mathbf{q}}_v)^T \mathcal{\mathbf{J}} \dot{\pmb{\omega}}\\
				= & 2 \eta (\tilde{\mathbf{q}}_v)^T(-E(\pmb{\omega})(\mathcal{\mathbf{J}} \pmb{\omega}) - K_{p,q} \tilde{\mathbf{q}}_v - K_{d,q}\pmb{\omega}),
			\end{aligned}
		\end{equation}
		where $\mathcal{\mathbf{J}}\dot{\pmb{\omega}}$ has been substituted with the inner loop dynamics.
		Then, using the fact that $a^T(b\times c) = c^T(a\times b)$, it implies that:
		\begin{equation}
			\begin{aligned}
				2\eta(\mathcal{\mathbf{J}} \tilde{\mathbf{q}}_v)^T \dot{\pmb{\omega}}  = & 2 \eta (- (\mathcal{\mathbf{J}}\pmb{\omega})^T(E(\tilde{\mathbf{q}}_v) \pmb{\omega}) - (\tilde{\mathbf{q}}_v)^T K_{p,q} \tilde{\mathbf{q}}_v - (\tilde{\mathbf{q}}_v)^T K_{d,q} \pmb{\omega}) \\
				= & - \pmb{\omega}^T (2\eta\mathcal{\mathbf{J}} E(\tilde{\mathbf{q}}_v)) \pmb{\omega} - (\tilde{\mathbf{q}}_v)^T (2\eta K_{p,q}) \tilde{\mathbf{q}}_v - (\tilde{\mathbf{q}}_v)^T(2\eta K_{d,q})\pmb{\omega},
			\end{aligned}
		\end{equation}
		and since $\pmb{\omega}^TE(\tilde{\mathbf{q}}_v)\pmb{\omega}=\pmb{\omega}^T(\tilde{\mathbf{q}}_v\times \pmb{\omega})=\pmb{\omega}^T(-\pmb{\omega}\times \tilde{\mathbf{q}}_v)=(\tilde{\mathbf{q}}_v)^T(-\pmb{\omega}\times\pmb{\omega})=0$, it follows that
		\begin{equation}\label{eq:fourthterm}
			\begin{split}
				2\eta(\mathcal{\mathbf{J}} \tilde{\mathbf{q}}_v)^T \dot{\pmb{\omega}} 
				= - (\tilde{\mathbf{q}}_v)^T (2\eta K_{p,q}) \tilde{\mathbf{q}}_v - (\tilde{\mathbf{q}}_v)^T(2\eta K_{d,q})\pmb{\omega}.
			\end{split}
		\end{equation}
	\item \textbf{Term 5}: The last term, according to the fact that $\mathcal{\mathbf{J}}$ is symmetric, can be calculated by
		\begin{equation}\label{eq:fifthterm1}
			\begin{aligned}
				(\mathcal{\mathbf{J}} \pmb{\omega})^T \dot{\pmb{\omega}} = & \pmb{\omega}^T (\mathcal{\mathbf{J}} \dot{\pmb{\omega}}) \\
				= & \pmb{\omega}^T(-E(\pmb{\omega})(\mathcal{\mathbf{J}} \pmb{\omega}) - K_{p,q}\tilde{\mathbf{q}}_v - K_{d,q} \pmb{\omega}),
			\end{aligned}
		\end{equation}
		where $\mathcal{\mathbf{J}}\dot{\pmb{\omega}}$ is substituted with the system dynamics.
		Then, using the fact that $a^T(a \times b)=b^T(a \times a) = 0$, it results
		\begin{equation}\label{eq:fifthterm}
			\begin{aligned}
				(\mathcal{\mathbf{J}} \pmb{\omega})^T \dot{\pmb{\omega}} = & -(\tilde{\mathbf{q}}_v)^T K_{p,q} \pmb{\omega} - \pmb{\omega}^T K_{d,q} \pmb{\omega}.
			\end{aligned}
		\end{equation}
\end{itemize}

Therefore, regrouping the relations \eqref{eq:fstterm}, \eqref{eq:secondterm}, \eqref{eq:thirdterm}, \eqref{eq:fourthterm}, and \eqref{eq:fifthterm} in matrix form, we obtain
\begin{equation}\label{eq:dotVin1}
	\begin{aligned}
		\dot{V}(\cdot) = &
		-
		\left[
			\begin{matrix}
				\tilde{\mathbf{q}}_v \\ \pmb{\omega}
			\end{matrix}
		\right]^T
		\left[
			\begin{matrix}
				2 \eta K_{p,q} \mathbf{I}_3 & \eta K_{d,q}(\mathcal{I}_{3}-h^R\mathbf{I}_3) \\
				\eta K_{d,q}(\mathcal{I}_{3}-\tilde{q}_0\mathbf{I}_3) & K_{d,q}\mathbf{I}_3 - \eta\mathcal{\mathbf{J}}(\tilde{q}_0\mathbf{I}_3+E(\tilde{\mathbf{q}}_v))
			\end{matrix}
		\right]
		\left[
			\begin{matrix}
				\tilde{\mathbf{q}}_v \\ \pmb{\omega}
			\end{matrix}
		\right] 
		-
		\left[
			\begin{matrix}
				\tilde{\mathbf{q}}_v \\ \pmb{\omega}
			\end{matrix}
		\right]^T
		\left[
			\begin{matrix}
				(K_{p,q}+2\eta K_{d,q}\tilde{q}_0)\mathbf{I}_3 \\
				\eta\mathcal{\mathbf{J}}(\tilde{q}_0\mathbf{I}_3 + E(\tilde{\mathbf{q}}_v))
			\end{matrix}
		\right]
		\pmb{\omega}_D.
	\end{aligned}
\end{equation}

Next, we use the angle-axis representationto make the error angle $\tilde{\zeta}$ appear in the equations.  
Doing so, it is possible to upper-bound the time derivative $\dot{V}(\cdot)$ of \eqref{eq:dotVin1} as
\begin{equation}\label{eq:dotVin2}
	\begin{aligned}
		\dot{V}(\cdot) \leq &
		-
		\left[
			\begin{matrix}
				\lVert\sin\frac{\tilde{\zeta}}{2}\lVert \\ \lVert\pmb{\omega}\lVert
			\end{matrix}
		\right]^T
		Q_{in}(\tilde{\zeta})
		\left[
			\begin{matrix}
				\lVert\sin\frac{\tilde{\zeta}}{2}\lVert \\
				\lVert\pmb{\omega}\lVert
			\end{matrix}
		\right]
		+
		\left[
			\begin{matrix}
				\lVert\sin\frac{\tilde{\zeta}}{2}\lVert \\ \lVert\pmb{\omega}\lVert
			\end{matrix}
		\right]^T
		D_{in}(\tilde{\zeta})
		\lVert \pmb{\omega}_D \lVert_{\infty},
	\end{aligned}
\end{equation}
where
\begin{align}
	Q_{in}(\tilde{\zeta}) = &
	\left[
		\begin{matrix}
			2 \eta K_{p,q} & \eta K_{d,q}\left(1-\left\lVert\cos\frac{\tilde{\zeta}}{2}\right\lVert\right) \\
			\eta K_{d,q}\left(1 - \left\lVert\cos\frac{\tilde{\zeta}}{2}\right\lVert\right) & K_{d,q} - \eta \lambda_M(\mathcal{\mathbf{J}})\left(\left\lVert\cos\frac{\tilde{\zeta}}{2}\right\lVert + \left\lVert \sin\frac{\tilde{\zeta}}{2} \right\lVert\right)
		\end{matrix}
	\right]\in\mathbb{R}^{2 \times 2}, \\
	D_{in}(\tilde{\zeta}) = &
	\left[
		\begin{matrix}
			K_{p,q} + 2\eta K_{d,q}\left\lVert\cos\frac{\tilde{\zeta}}{2}\right\lVert \\
			\eta \lambda_M(\mathcal{\mathbf{J}}) \left( \left\lVert\cos \frac{\tilde{\zeta}}{2}\right\lVert + \left\lVert \sin\frac{\tilde{\zeta}}{2} \right\lVert \right)
		\end{matrix}
	\right]\in\mathbb{R}^{2 \times 1}.
\end{align}

Since $0\leq\left\lVert\cos\frac{\tilde{\zeta}}{2}\right\lVert \leq 1$, and $0\leq \left\lVert\sin\frac{\tilde{\zeta}}{2} \right\lVert \leq 1$, we can lower-bound $Q_{in}(\tilde{\zeta})$ with
\begin{equation}
	Q_{in}(\tilde{\zeta})  \geq \bar{Q}_{in} =
	\left[
		\begin{matrix}
			2 \eta K_{p,q} & \eta K_{d,q} \\
			\eta K_{d,q} & K_{d,q}-2\eta\lambda_M(\mathcal{\mathbf{J}})
		\end{matrix}
	\right] \in \mathbb{R}^{2 \times 2},
\end{equation}
which is positive definite for $\eta$ chosen such that it satisfies \eqref{eq:etaquat}. 
Similarly, it is possible to upper-bound $D_{in}(\tilde{\zeta})$ by
\begin{equation}
	D_{in}(\tilde{\zeta}) \leq \bar{D}_{in} =
	\left[
		\begin{matrix}
			K_{p,q} + 2 \eta K_{d,q}  \\
			2\eta\lambda_M(\mathcal{\mathbf{J}})
		\end{matrix}
	\right] \in \mathbb{R}^{2 \times 1}.
\end{equation}

Hence, $\dot{V}(\cdot)$ in \eqref{eq:dotVin2} satisfies
\begin{equation}\label{eq:dotVin3}
	\begin{split}
		\dot{V}(\cdot) \leq 
		-
		\left[
			\begin{matrix}
				\lVert\sin\frac{\tilde{\zeta}}{2}\lVert \\ \lVert\pmb{\omega} \lVert
			\end{matrix}
		\right]^T
		\bar{Q}_{in}
		\left[
			\begin{matrix}
				\lVert\sin\frac{\tilde{\zeta}}{2}\lVert \\ \lVert\pmb{\omega}\lVert 
			\end{matrix}
		\right]
		+
		\left[
			\begin{matrix}
				\lVert\sin\frac{\tilde{\zeta}}{2}\lVert\\ 
				\lVert\pmb{\omega}\lVert
			\end{matrix}
		\right]^T
		\bar{D}_{in}
		\lVert \pmb{\omega}_D \lVert_{\infty},
	\end{split}
\end{equation}
where $\bar{Q}_{in}$ and $\bar{D}_{in}$ are constant matrices.
As a result, it is possible to derive the following implication:
\begin{equation}
	\left[
		\begin{matrix}
			\lVert \sin\frac{\tilde{\zeta}}{2} \lVert \\ \lVert \pmb{\omega} \lVert
		\end{matrix}
	\right] >
	\left(\bar{Q}_{in}^{-1}\bar{D}_{in}\right)\lVert\pmb{\omega}_D\lVert_{\infty} \Rightarrow \dot{V}<0,
\end{equation}
which makes the system ISS with respect to the exogenous input $\pmb{\omega}_D$.

It remains to prove that the asymptotic gain $\gamma_{in}$ between $\tilde{\zeta}$ and $\pmb{\omega}_D$ can be made arbitrarily small.
This asymptotic gain is given by
\begin{equation}
	\left[ \begin{matrix} \gamma_{in} \\ \cdot \end{matrix} \right] = \left(\bar{Q}_{in}^{-1}\bar{D}_{in}\right) \lVert \omega_D \lVert_{\infty}.
\end{equation}

Considering the parameter choice $K_{d,q}\propto \sqrt{K_{p,q}}$, the following proportional dependencies are derived using the dominant degree of $K_{p,q}$ in each element of the matrices
\begin{align}
	\bar{Q}_{in} \propto &
	\left[
		\begin{matrix}
			K_{p,q} & \sqrt{K_{p,q}} \\
			\sqrt{K_{p,q}} & \sqrt{K_{p,q}}
		\end{matrix}
	\right],\\
	\dfrac{1}{\text{det}(\bar{Q}_{in})} \propto & \dfrac{1}{K_{p,q}\sqrt{K_{p,q}}}, \\
	\bar{Q}_{in}^{-1} \propto &
	\left[
		\begin{matrix}
			\frac{1}{K_{p,q}} & \frac{1}{K_{p,q}} \\
			\frac{1}{K_{p,q}} & \frac{1}{\sqrt{K_{p,q}}}
		\end{matrix}
	\right],\\
	\bar{D}_{in} \propto &
	\left[
		\begin{matrix}
			K_{p,q} \\
			\cdot
		\end{matrix}
	\right].
\end{align}

From the above-statements, it follows
\begin{align}
	\bar{Q}_{in}^{-1}\bar{D}_{in} \propto 
	\left[
		\begin{matrix}
			\frac{1}{K_{p,q}} \\
			\frac{1}{\sqrt{K_{p,q}}}
		\end{matrix}
	\right] \Rightarrow
	\gamma_{in} \propto  \frac{1}{K_{p,q}},
\end{align}
which concludes the proof.

\section{Proof of Proposition 2}\label{proof:outerloop}

Define $B_p:=\{\hat{\pmb{\theta}},\hat{\pmb{\phi}}\}$ as the orthonormal basis for $\mathbb{S}^2:=\text{span}\{\mathbf{p}\}^\perp$, where $\mathbb{S}^2$ is the field of vectors tangent to the surface of a sphere of radius $L$.
The system dynamics can be rewritten as
\begin{align}\label{eq:outerloop2}
\begin{cases}
\dot{\mathbf{p}}=& v_\theta\hat{\pmb{\theta}} + v_\phi\hat{\pmb{\phi}}\\
\dot{\mathbf{v}}=& \text{dist}(\mathbf{p},\mathbf{p}_d)h_{p,t}
\left[
\begin{matrix}
\langle \hat{\textbf{t}},\hat{\pmb{\theta}} \rangle \\
\langle \hat{\textbf{t}},\hat{\pmb{\phi}} \rangle
\end{matrix}
\right]
- h_{d,t} \mathbf{v} + \pmb{\Delta}_{\tilde{\zeta}},
\end{cases}
\end{align}
where $\mathbf{v}:=[v_\theta,v_\phi]^T\in\mathbb{R}^{2}$ is the velocity vector whose components $v_{\theta}$ and $v_{\phi}$ are the polar and azimuthal velocities, respectively, $h_{p,t}:=K_{p,t}/m \in\mathbb{R}_{> 0}$ and $h_{d,t}:=K_{d,t}/m\in\mathbb{R}_{> 0}$ the proportional and the derivative gains divided by $m$, respectively, and $\pmb{\Delta}_{\tilde{\zeta}}$ the projected exogenous input
\begin{equation}
\pmb{\Delta}_{\tilde{\zeta}}:=
\left[
\begin{matrix}
\langle \delta_{\tilde{\zeta}},\hat{\pmb{\theta}}\rangle\\
\langle \delta_{\tilde{\zeta}},\hat{\pmb{\phi}}\rangle
\end{matrix}
\right] \in \mathbb{R}^2.
\end{equation}

Next, consider the Lyapunov function candidate \cite{bullo1995control}
\begin{equation}\label{eq:lyapunovBullo}
V_{out}=\dfrac{1}{2}\text{dist}(\mathbf{p},\mathbf{p}_d)^2 + \dfrac{h_{p,t}^{-1}}{2}\lVert \mathbf{v} \lVert^2 - \epsilon\text{Cross},
\end{equation}
where $\epsilon\in\mathbb{R}_{>0}$ is a positive parameter such that $\epsilon<\sqrt{h^{-1}_{p,t}},$ and
\begin{equation}\label{eq:cross}
\text{Cross}:=\text{dist}(\mathbf{p},\mathbf{p}_d)\left\langle \mathbf{v}, \left[
\begin{matrix}
\langle \hat{\textbf{t}},\hat{\pmb{\theta}} \rangle \\
\langle \hat{\textbf{t}},\hat{\pmb{\phi}} \rangle
\end{matrix}
\right]
\right\rangle.
\end{equation}
The time derivative of (\ref{eq:lyapunovBullo}) is
\begin{equation}\label{eq:Vdotstep}
\dot{V}_{out}=\text{dist}(\mathbf{p},\mathbf{p}_d)\dfrac{d}{dt}\left(\text{dist}(\mathbf{p},\mathbf{p}_d)\right) + h_{p,t}^{-1}\lVert \mathbf{v} \lVert \langle \dfrac{\mathbf{v}}{\lVert \mathbf{v} \lVert}, \dot{\mathbf{v}} \rangle - \epsilon\dfrac{d}{dt}\text{Cross}.
\end{equation}

In order to make the following steps clearer, we split (\ref{eq:Vdotstep}) into two parts
\begin{equation}\label{eq:Vdot}
\dot{V}_{out}=\dot{V}_1 - \dot{V}_2,
\end{equation}
where
\begin{align}
\dot{V}_1:=&\text{dist}(\mathbf{p},\mathbf{p}_d)\dfrac{d}{dt}\left(\text{dist}(\mathbf{p},\mathbf{p}_d)\right) + h_{p,t}^{-1}\langle \mathbf{v}, \dot{\mathbf{v}}\rangle \label{eq:V1}\\
\dot{V}_2:=&\epsilon\dfrac{d}{dt}\text{Cross} \label{eq:V2}.
\end{align}

Computing the time derivative of the great-circle distance, we obtain
\begin{equation}\label{eq:lemBullo}
\dfrac{d}{dt}(\text{dist}(\mathbf{p},\mathbf{p}_d))=-\langle \dot{\mathbf{p}},\hat{\textbf{t}} \rangle
\end{equation}
because $\dot{\mathbf{p}}\perp \mathbf{p}$ and therefore $\langle \dot{\mathbf{p}},\mathbf{p} \rangle=0$.
As a consequence, using (\ref{eq:lemBullo}) in (\ref{eq:V1}), we obtain
\begin{equation}\label{eq:V1_2}
\dot{V}_1=-\text{dist}(\mathbf{p},\mathbf{p}_d)\langle \dot{\mathbf{p}},\hat{\textbf{t}} \rangle + h_{p,t}^{-1}\langle \mathbf{v},\dot{\mathbf{v}}\rangle.
\end{equation}
Injecting the dynamics (\ref{eq:outerloop2}) in (\ref{eq:V1_2}), we obtain
\begin{equation}\label{eq:V1_3}
\dot{V}_1=
-\text{dist}(\mathbf{p},\mathbf{p}_d)\left\langle \mathbf{v},\left[\begin{matrix}
\langle \hat{\textbf{t}},\hat{\pmb{\theta}}\rangle \\
\langle \hat{\textbf{t}},\hat{\pmb{\phi}}\rangle
\end{matrix}
\right]
\right\rangle + h_{p,t}^{-1} \left \langle \mathbf{v}, \left(
\text{dist}(\mathbf{p},\mathbf{p}_d) h_{p,t}\left[
\begin{matrix}
\langle \hat{\textbf{t}},\hat{\pmb{\theta}} \rangle \\
\langle \hat{\textbf{t}},\hat{\pmb{\phi}} \rangle
\end{matrix}
\right]
- h_{d,t} \mathbf{v}
\right) \right\rangle + h_{p,t}^{-1} \langle v, \pmb{\Delta}_{\tilde{\zeta}} \rangle.
\end{equation}
Then, after simplifications, (\ref{eq:V1_2}) becomes
\begin{equation}
\dot{V}_1=-h_{p,t}^{-1} h_{d,t} \lVert \mathbf{v} \lVert^2 + h_{p,t}^{-1} \langle \mathbf{v},\pmb{\Delta}_{\tilde{\zeta}} \rangle,
\end{equation}
which is upper-bounded by
\begin{equation}\label{eq:V1final}
\dot{V}_1\leq-h_{p,t}^{-1} h_{d,t} \lVert \mathbf{v} \lVert^2 + h_{p,t}^{-1} \lVert \mathbf{v} \lVert \lVert \pmb{\Delta}_{\tilde{\zeta}} \lVert.
\end{equation}

For what regards $V_2$, we can rewrite (\ref{eq:V2}) as
\begin{equation}\label{eq:V2_2}
\dot{V}_2=\epsilon\left\{
\dfrac{d}{dt}(\text{dist}(\mathbf{p},\mathbf{p}_d))\langle \mathbf{v},\left[
\begin{matrix}
\langle \hat{\textbf{t}},\hat{\pmb{\theta}}\rangle \\
\langle \hat{\textbf{t}},\hat{\pmb{\phi}}\rangle
\end{matrix}
\right] \rangle+\text{dist}(\mathbf{p},\mathbf{p}_d)\left(
\langle \dot{\mathbf{v}} , \left[
\begin{matrix}
\langle \hat{\textbf{t}},\hat{\pmb{\theta}} \rangle\\
\langle \hat{\textbf{t}},\hat{\pmb{\phi}} \rangle
\end{matrix}
\right]\rangle
+ \langle \mathbf{v},\dfrac{d}{dt}\hat{\textbf{t}}\rangle \right)
\right\}.
\end{equation}
Using (\ref{eq:lemBullo}) and injecting the dynamics (\ref{eq:outerloop2}) in (\ref{eq:V2_2}), we obtain
\begin{equation}\label{eq:V2final}
\dot{V}_2=\epsilon(-\lVert \mathbf{v} \lVert^2 + B_1 + B_2),
\end{equation}
where
\begin{equation}\label{eq:B1}
B_1:=
\text{dist}(\mathbf{p},\mathbf{p}_d)\left\langle \text{dist}(\mathbf{p},\mathbf{p}_d) h_{p,t}
\left[
\begin{matrix}
\langle \hat{\textbf{t}},\hat{\pmb{\theta}} \rangle\\
\langle \hat{\textbf{t}},\hat{\pmb{\phi}} \rangle
\end{matrix}
\right]
- h_{d,t}\mathbf{v}+\pmb{\Delta}_{\tilde{\zeta}},
\left[
\begin{matrix}
\langle \hat{\textbf{t}},\hat{\pmb{\theta}} \rangle\\
\langle \hat{\textbf{t}},\hat{\pmb{\phi}} \rangle
\end{matrix}
\right]\right\rangle,
\end{equation}
and
\begin{equation}\label{eq:B2}
B_2:=\text{dist}(\mathbf{p},\mathbf{p}_d)\langle \mathbf{v},\dfrac{d}{dt}\hat{\textbf{t}} \rangle.
\end{equation}

For what concerns $B_1$, developing the scalar product in (\ref{eq:B1}) leads to
\begin{equation}\label{eq:B1_2}
B_1=
\text{dist}(\mathbf{p},\mathbf{p}_d)^2h_{p,t}-\text{dist}(\mathbf{p},\mathbf{p}_d)h_{d,t}
\langle \mathbf{v},\left[
\begin{matrix}
\langle \hat{\textbf{t}},\hat{\pmb{\theta}}\rangle \\
\langle \hat{\textbf{t}},\hat{\pmb{\phi}}\rangle
\end{matrix}
\right]\rangle+\text{dist}(\mathbf{p},\mathbf{p}_d)
\langle \pmb{\Delta}_{\tilde{\zeta}},\left[
\begin{matrix}
\langle \hat{\textbf{t}},\hat{\pmb{\theta}}\rangle \\
\langle \hat{\textbf{t}},\hat{\pmb{\phi}}\rangle
\end{matrix}
\right]\rangle.
\end{equation}
As a consequence, we can lower-bound (\ref{eq:B1_2}) with
\begin{equation}\label{eq:B1final}
B_1\geq h_{p,t} \text{dist}(\mathbf{p},\mathbf{p}_d)^2-h_{d,t}\text{dist}(\mathbf{p},\mathbf{p}_d)\lVert \mathbf{v} \lVert - \text{dist}(\mathbf{p},\mathbf{p}_d)\lVert \pmb{\Delta}_{\tilde{\zeta}} \lVert.
\end{equation}

The next step is to compute $B_2$. To do so, we first transform $\hat{\textbf{t}}$ using the following manipulation
\begin{equation}\label{eq:tscalar}
\hat{\textbf{t}}=\dfrac{\mathbf{p}_d-\langle \mathbf{p}/L,\mathbf{p}_d/L \rangle \mathbf{p}}{L\sin\Delta_{\sigma}},
\end{equation}
where
\begin{equation}\label{eq:deltasigma}
\Delta_{\sigma}:=\text{dist}(\mathbf{p},\mathbf{p}_d)/L
\end{equation}
is the angle between $\mathbf{p}$ and $\mathbf{p}_d$.
The time derivative of (\ref{eq:tscalar}) is
\begin{equation}\label{eq:ddtt}
\dfrac{d}{dt}\hat{\textbf{t}}=\dfrac{(-\langle \dot{\mathbf{p}}/L,\mathbf{p}_d/L \rangle \mathbf{p} - \langle \mathbf{p}/L,\mathbf{p}_d/L \rangle \dot{\mathbf{p}})\sin\Delta_{\sigma}}{L\sin^2\Delta_{\sigma}}-\dfrac{(\mathbf{p}_d-\langle \mathbf{p}/L,\mathbf{p}_d/L \rangle \mathbf{p})\cos\Delta_{\sigma}\dot{\Delta}_{\sigma}}{L\sin^2\Delta_{\sigma}}.
\end{equation}
Following from (\ref{eq:deltasigma}) and (\ref{eq:lemBullo}), we have
\begin{equation}\label{eq:deltasigmadot}
\dot{\Delta}_{\sigma}=-\dfrac{\langle \dot{\mathbf{p}},\hat{\textbf{t}} \rangle}{L}.
\end{equation}
Then, re-using (\ref{eq:tscalar}) and injecting (\ref{eq:deltasigmadot}) in (\ref{eq:ddtt}), we obtain
\begin{equation}\label{eq:ddtt2}
\dfrac{d}{dt}\hat{\textbf{t}}=
-\dfrac{\langle \dot{\mathbf{p}}/L,\mathbf{p}_d/L \rangle \mathbf{p}}{L\sin\Delta_{\sigma}}-\dfrac{\langle \mathbf{p}/L,\mathbf{p}_d/L \rangle \dot{\mathbf{p}}}{L\sin\Delta_{\sigma}}+\dfrac{\cos\Delta_{\sigma}}{\sin\Delta_{\sigma}}\hat{\textbf{t}}\dfrac{\langle \dot{\mathbf{p}},\hat{\textbf{t}} \rangle}{L}.
\end{equation}
At this point, we can use $\langle \mathbf{p}/L,\mathbf{p}_d/L \rangle=\cos\Delta_{\sigma}$ in (\ref{eq:ddtt2}) as
\begin{equation}\label{eq:ddttexplicit}
\dfrac{d}{dt}\hat{\textbf{t}}=-\dfrac{\langle \dot{\mathbf{p}}/L,\mathbf{p}_d/L \rangle \mathbf{p}}{L\sin\Delta_{\sigma}} - \dfrac{\cos\Delta_{\sigma}}{L\sin\Delta_{\sigma}}(\dot{\mathbf{p}}-\hat{\textbf{t}}\langle \dot{\mathbf{p}},\hat{\textbf{t}} \rangle).
\end{equation}
Since $\dot{\mathbf{p}}\perp \mathbf{p}$, we have $\langle \dot{\mathbf{p}},\mathbf{p} \rangle=0$ and using (\ref{eq:deltasigma}) and (\ref{eq:ddttexplicit}) in (\ref{eq:B2}), we obtain
\begin{equation}\label{eq:B2interm}
B_2=-\dfrac{\Delta_{\sigma}\cos\Delta_{\sigma}}{\sin\Delta_{\sigma}}(\lVert \dot{\mathbf{p}} \lVert^2 - \langle \dot{\mathbf{p}},\hat{\textbf{t}} \rangle^2).
\end{equation}
Remark that, since $\max_{\Delta_{\sigma}\in(-\pi/2,\pi/2)}\left\{\dfrac{\Delta_{\sigma}\cos\Delta_{\sigma}}{\sin\Delta_{\sigma}}\right\}=1$, (\ref{eq:B2interm}) can be lower-bounded by
\begin{equation}\label{eq:B2final}
B_2\geq -\lVert \dot{\mathbf{p}} \lVert^2.
\end{equation}

Consequently, combining (\ref{eq:Vdot}), (\ref{eq:V1final}), (\ref{eq:V2final}), (\ref{eq:B1final}), and (\ref{eq:B2final}), we can upper-bound $\dot{V}_{out}$ as
\begin{equation}\label{eq:Vdotineq}
\dot{V}_{out}\leq-h_{p,t}^{-1}h_{d,t}\lVert  \mathbf{v} \lVert^2 + \epsilon(2\lVert \mathbf{v} \lVert^2- h_{p,t}\text{dist}(\mathbf{p},\mathbf{p}_d)^2+h_{d,t}\text{dist}(\mathbf{p},\mathbf{p}_d)\lVert \mathbf{v} \lVert)+(h_{p,t}^{-1}\lVert \mathbf{v} \lVert + \epsilon\text{dist}(\mathbf{p},\mathbf{p}_d))\lVert \pmb{\Delta}_{\tilde{\zeta}} \lVert.
\end{equation}
Note that the great-circle distance can be bounded by
\begin{equation}
\lVert \tilde{\mathbf{p}} \lVert \leq \text{dist}(\mathbf{p},\mathbf{p}_d)<\dfrac{\pi}{2} \lVert \tilde{\mathbf{p}} \lVert,
\end{equation}
where $\tilde{\mathbf{p}}:=\mathbf{p}-\mathbf{p}_d$.
As a consequence, we can rewrite (\ref{eq:Vdotineq}) as
\begin{eqnarray}\label{eq:Voutmatrix}
\dot{V}_{out}\leq -
[\begin{matrix}
\lVert \tilde{\mathbf{p}} \lVert & \lVert \mathbf{v} \lVert
\end{matrix}]
Q
\left[
\begin{matrix}
\lVert \tilde{\mathbf{p}} \lVert \\
\lVert \mathbf{v} \lVert
\end{matrix}
\right]+h_{p,t}^{-1}\lVert \mathbf{v} \lVert \lVert \pmb{\Delta}_{\tilde{\zeta}} \lVert + \epsilon\lVert \tilde{\mathbf{p}} \lVert\lVert \pmb{\Delta}_{\tilde{\zeta}} \lVert,
\end{eqnarray}
where 
\begin{equation}
Q=\left[
\begin{matrix}
\epsilon h_{p,t} & -\epsilon\dfrac{h_{d,t}\pi}{4} \\
-\epsilon\dfrac{h_{d,t}\pi}{4} & h_{p,t}^{-1}h_{d,t}-2\epsilon
\end{matrix}
\right].
\end{equation}
The first term on the right-hand side of (\ref{eq:Voutmatrix}) is strictly negative if $Q$ is positive definite meaning if
\begin{equation}
\epsilon <\frac{16h_{d,t}}{32h_{p,t} +h_{d,t}^2\pi^2}.
\end{equation}
Under this condition, $\dot{V}_{out}$ is negative definite if $(\tilde{\mathbf{p}},\mathbf{v})$ satisfies
\begin{equation}\label{eq:ISSbound}
\left\|
\begin{bmatrix}
\tilde{\mathbf{p}} \\
\mathbf{v}
\end{bmatrix}\right\| > \mu^{-1} \left\|\begin{bmatrix}
\epsilon \\
h_{p,t}^{-1}
\end{bmatrix}\right\| \lVert \pmb{\Delta}_{\tilde{\zeta}} \lVert,
\end{equation}
where $\mu$ is the smallest eigenvalue of the matrix $Q$.
Then, due to the presence of the cable, it is worth noting that $\|\tilde{\mathbf{p}}\|\leq2L$. As a result, it follows from equation \eqref{eq:ISSbound} that $\|\pmb{\Delta}_{\tilde\zeta}\|$ must be upper bounded by
\begin{equation}
\Delta_{\tilde{\zeta},\max{}} = \frac{4\mu L^2}{\sqrt{\epsilon^2+h_{p,t}^{-2}}}.
\end{equation}
Since the origin is ISS with restrictions on $\Delta_{\tilde\zeta}$, it follows from Lemma \ref{lem:Kfunction}, that it is also ISS with restrictions on $\tilde{\zeta}$, which concludes the proof.

\end{document}